\PassOptionsToPackage{table}{xcolor}
\documentclass{article}
\usepackage{iclr2027_conference,times}
\usepackage[T1]{fontenc}

\usepackage{amsmath,amsfonts,bm}

\def\eqref#1{equation~\ref{#1}}

\def\1{\bm{1}}

\DeclareMathAlphabet{\mathsfit}{\encodingdefault}{\sfdefault}{m}{sl}
\SetMathAlphabet{\mathsfit}{bold}{\encodingdefault}{\sfdefault}{bx}{n}

\newcommand{\R}{\mathbb{R}}

\newcommand{\OracleGreedySurplus}{0.012}
\newcommand{\OracleUnopSurplus}{0.145}
\newcommand{\OracleWelfareSurplus}{0.162}
\newcommand{\OracleUnopDiff}{0.133}
\newcommand{\OracleUnopLo}{0.124}
\newcommand{\OracleUnopHi}{0.141}
\newcommand{\OracleUnopPos}{89.2}
\newcommand{\OracleUnopBayes}{0.946}

\newcommand{\OracleN}{120}
\newcommand{\LockBayes}{0.250}
\newcommand{\LockAcc}{0.248}
\newcommand{\LockKL}{0.012}
\newcommand{\LockHuman}{0.860}
\newcommand{\LockGreedyHuman}{0.750}
\newcommand{\LockGap}{0.0000}
\newcommand{\RecExactGreedy}{0.082}
\newcommand{\RecExactUnop}{0.160}
\newcommand{\RecVanillaGreedy}{0.495}
\newcommand{\RecVanillaUnop}{0.459}

\newcommand{\NoiseZero}{0.145}
\newcommand{\NoiseThirty}{-0.006}

\newcommand{\FamOracleG}{0.011}
\newcommand{\FamOracleU}{0.151}
\newcommand{\FamThompG}{0.010}
\newcommand{\FamThompU}{0.107}
\newcommand{\FamUcbG}{0.004}
\newcommand{\FamUcbU}{0.006}
\newcommand{\FamExpG}{0.156}
\newcommand{\FamExpU}{0.191}
\newcommand{\FamOracleKL}{0.0043}
\newcommand{\RecExactDiff}{0.078}
\newcommand{\RecExactLo}{0.064}
\newcommand{\RecExactHi}{0.092}
\newcommand{\RecExactPos}{74.4}
\newcommand{\RecBeamDiff}{0.102}
\newcommand{\RecBeamLo}{0.090}
\newcommand{\RecBeamHi}{0.115}
\newcommand{\RecBeamPos}{98.8}
\newcommand{\RecVanDiff}{-0.036}
\newcommand{\RecVanLo}{-0.040}
\newcommand{\RecVanHi}{-0.032}
\newcommand{\BoltMatch}{0.1454}
\newcommand{\UnopMatch}{0.1453}

\newcommand{\DenseUnop}{0.146}
\newcommand{\DenseGreedy}{0.013}
\newcommand{\DenseOracle}{0.163}
\newcommand{\NoiseViol}{0.119}
\newcommand{\ConflictNeg}{0.104}
\newcommand{\ConflictNegLo}{0.073}
\newcommand{\ConflictNegHi}{0.139}
\newcommand{\ConflictPos}{-0.002}
\newcommand{\ConflictPosLo}{-0.009}
\newcommand{\ConflictPosHi}{0.004}

\usepackage[hyphens]{url}
\usepackage{graphicx}
\usepackage{booktabs}
\usepackage{amsmath,amssymb,amsthm}
\usepackage{xcolor}
\usepackage{colortbl}
\usepackage{float}
\usepackage{algorithm}
\usepackage{algpseudocode}
\usepackage{fontawesome5}
\usepackage{hyperref}
\hypersetup{colorlinks=true,citecolor=blue,linkcolor=blue,urlcolor=blue}

\definecolor{cellgreen}{RGB}{186,228,196}
\definecolor{cellblue}{RGB}{189,215,238}

\newtheorem{theorem}{Theorem}
\newtheorem{proposition}{Proposition}
\newtheorem{definition}{Definition}

\newcommand{\unop}{UNOP}
\newcommand{\aunop}{Aware-UNOP}
\newcommand{\cA}{\mathcal{A}}
\newcommand{\cH}{\mathcal{H}}
\newcommand{\cO}{\mathcal{O}}
\newcommand{\EE}{\mathbb{E}}
\newcommand{\PP}{\mathbb{P}}
\newcommand{\kl}{\mathrm{KL}}
\newcommand{\tv}{\mathrm{TV}}

\title{Learnable Randomization as Commitment\\
Against Adaptive Optimizers}

\author{
\textbf{Zihan Deng}$^{1}$\thanks{Corresponding author: \texttt{zihandeng@connect.hku.hk}.},
\textbf{Chuanzhi Xu}$^{2}$,
\textbf{Xiaozhen Zhong}$^{3}$,
\textbf{Haoyang Li}$^{2}$,
\textbf{Junjie Huang}$^{4}$\\[0.45em]
$^{1}$The University of Hong Kong\\
$^{2}$The University of Sydney\\
$^{3}$University of Electronic Science and Technology of China\\
$^{4}$University of Science and Technology of China\\[0.45em]
\href{https://github.com/FrankDengAI/unop}{\faGithub\hspace{0.35em}\textit{Code}}
}

\iclrfinalcopy
\makeatletter
\def\@maketitle{%
 \vbox{\hsize\textwidth
  \centering
  {\LARGE\bfseries \@title\par}
  \vskip 0.75em
  {\large \@author\par}
  \vskip 0.45em
  {\footnotesize \@thanks\par}
  \vskip 0.2in
 }%
}
\makeatother

\begin{document}

\maketitle
\fancyhead{}
\renewcommand{\headrulewidth}{0pt}
\pagestyle{plain}

\begin{abstract}
A pricing page can walk the posted price up to the last amount a buyer still accepts, a recommender can hold back a better item for a barely acceptable promoted one, and a classifier can shift its boundary once applicants change their features. The system predicts the response and then picks the menu that serves its own objective, so the surplus above the user's cutoff is taken. Playing the single best action publishes that cutoff, while noise on actions the user would never take throws away payoff and teaches the platform that a worse menu is still acceptable.
We study unpredictable near-optimal policies (\unop{}), which mix uniformly on near-best actions that remain individually rational. The mixture is a commitment about the response. On a finite price grid, when the best sure-demand price strictly out-earns the randomized band, a seller who already knows the curve posts below the band, and the purchase that occurs is deterministic. Knowing that curve is not the same as predicting the next draw. The mixture can be learned and the optimizer can match its best response, while the user's payoff stays higher because the mixture changes which action is targeted.
In pricing and in policy-aware recommendation this leaves more surplus than greedy play when the platform optimizes against the curve and more than one action is acceptable. The gain goes away under quality ranking, a singleton near-optimal set, a wrong utility estimate, or a short-horizon explorer. That is also where mixing should be turned off if the other side is trying to cooperate.

\end{abstract}

\section{Introduction}
\label{sec:intro}

\begin{figure}[t]
\centering
\vspace{-0.3em}
\includegraphics[width=\linewidth]{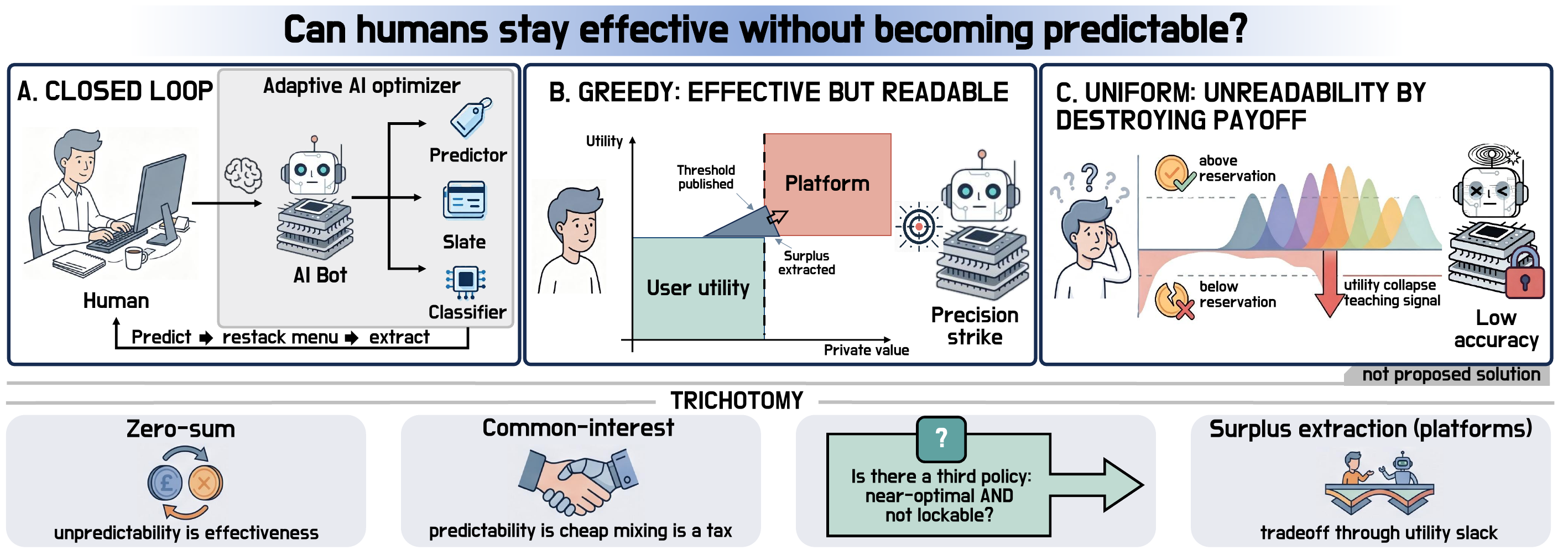}
\caption{(A)~A human and an adaptive optimizer in a closed loop. The optimizer may learn the response distribution and still revise the menu. (B)~A greedy threshold publishes the user's cutoff. (C)~Uniform noise refuses good actions and destroys payoff. The protective randomness in pricing is the response the buyer would give on menus the optimizer then declines to offer. A mixed security strategy is required in zero-sum play, while mixing is a coordination tax when interests coincide.}
\label{fig:problem}
\end{figure}

A growing fraction of human decisions are taken next to an adaptive algorithm, whether a platform sets a price after watching who buys~\citep{calvano2020artificial,braverman2018selling}, a recommender builds a list after watching who clicks~\citep{dean2020preference,hron2023modeling,stray2024building}, or a classifier is retrained on the features applicants choose to present~\citep{hardt2016strategic,perdomo2020performative}. In each case the algorithm is an optimizer rather than a fixed environment, because it predicts behavior and then chooses the action that maximizes its own objective given that prediction. Once the prediction is accurate the menu moves with it, so the user can keep accepting the offered price, list, or decision while the surplus above the user's cutoff is taken by the optimizer.

Once the interaction is written as an ordinary learning problem, it is easy to study only how to predict the user or how to extract surplus from a user who has already been predicted. Strategic buyers are usually treated as long-horizon manipulators of a learning seller, or the seller is designed so that the manipulation does not pay~\citep{drutsa2018weakly,drutsa2020optimal,huh2026strategy}. A pure best response is a deterministic function of the current menu, so an optimizer that identifies it can post a price just below value, withhold a better organic item in favor of a barely acceptable promoted one, or shift a decision boundary against a known gaming direction. Uniform randomization conceals that function only by collapsing the user's own payoff, while $\varepsilon$-greedy and Boltzmann exploration put noise on actions that are far from optimal or individually irrational, and an adaptive optimizer can read those loss-making plays as evidence that the current menu is still acceptable.

We do not try to hide the response distribution, and the pricing guarantee below assumes the optimizer knows it exactly. Figure~\ref{fig:problem} shows the closed loop. In pricing, once the seller stays below the randomized band, the action that is actually played is often deterministic, because the randomization only specifies what the buyer would do at prices the seller then chooses not to post. We call this the response-curve commitment effect. The platform can learn the distribution and optimize against the whole curve, yet still be led to a menu that takes less surplus. The question is therefore not how to keep the policy secret. It is which response curve the user can credibly commit to once the platform is free to learn it.

The shape of the user's payoffs decides how much room that commitment has. Let $u$ be the estimated payoff vector and let $\cA_\alpha(u)=\{a:u(a)\ge \max u-\alpha\}$ be the set of actions that never lose more than $\alpha$ relative to the best one-step action. When the set is a singleton, effectiveness forces a predictable action, whereas a larger set can be mixed so as to maximize entropy or to put less weight on actions the current forecast already expects. We call the resulting family unpredictable near-optimal policies (\unop{}). Individual rationality is part of the support, because mixing into a strictly loss-making action, such as buying above value or clicking below reservation, teaches the optimizer that the cutoff can still be moved. \unop{} therefore mixes only inside $\cA_\alpha(u)$ and above a reservation floor.

The same payoff shape produces three different conclusions, which we prove and then measure. In zero-sum games such as matching pennies and rock, paper, scissors, an adaptive best responder turns a greedy pure policy into a security-level loss, so unpredictability is what makes the user effective~\citep{nash1950equilibrium,osborne1994course}. In common-interest games such as coordination and stag hunt, the Pareto-optimal pure profile is already effective and fully predictable, so mixing is only a coordination tax and being read is what good play looks like when the other side is a collaborator~\citep{carroll2019utility,hadfield2016cooperative}. In surplus extraction the platform identifies a threshold and sits on the user's cutoff~\citep{hardt2022performative}. A greedy user gives that cutoff away and a uniform user refuses both the cutoff and the surplus, while \unop{} blurs only the boundary, with sure demand below $v-\alpha$, sure refusal above $v$, and mixing in between. Under Proposition~\ref{prop:price} this preserves $\Omega(\alpha)$-scale pricing surplus, and the same support restriction forces a margin-seeking list to trade acceptance against user quality.

Our contributions can be summarized as follows:
\begin{itemize}
\item We separate distributional learning error, realized-action predictability, optimizer regret relative to an exact best response, and the surplus that best response extracts.
\item We characterize constant-band commitments on a finite price grid. \unop{} is the maximum-entropy support-safe response. It induces the same posted price as the welfare-maximizing band rate precisely when purchase rate $1/2$ strictly keeps the seller below the band.
\item We show that predictors can learn the mixture and match an oracle best response, while user surplus stays above greedy play in pricing and under a policy-aware recommender. Quality ranking reverses that comparison, a singleton $\alpha$-set on every feasible menu removes it, and large utility noise or a short-horizon explorer makes the protection fail.
\end{itemize}

\section{Related work}
\label{sec:related}

Learning in Games and Strategizing against Learners.
Classical learning in games studies no-regret dynamics and their equilibria~\citep{fudenberg1998theory,cesabianchi2006prediction}. Weighted-majority and exponential-weights bounds make the same point~\citep{littlestone1994weighted,freund1997decision}. A recent line asks the converse: how should a patient optimizer play against a no-regret learner~\citep{deng2019strategizing,braverman2018selling,mansour2022strategizing,camara2020mechanisms,haghtalab2022learning,kolumbus2022how}? The typical conclusion is that the optimizer can secure at least the Stackelberg value, and sometimes more against mean-based learners. Repeated posted-price auctions with a strategic buyer, and seller algorithms designed to remain robust to that buyer, study the other side of the interaction as long-horizon manipulation or as a robust mechanism~\citep{drutsa2018weakly,drutsa2020optimal,huh2026strategy}. We study a simple support-constrained commitment and separate policy learning, realization prediction, optimizer regret, and extraction. Proper scoring rules make the same separation at the level of a forecast: log loss equals entropy plus a divergence, so a high log loss need not mean the forecast missed the distribution~\citep{cover2006elements}.

Performative Prediction and Platforms.
Performative prediction~\citep{perdomo2020performative,hardt2022performative} and strategic classification~\citep{hardt2016strategic,bruckner2012stackelberg,dong2018strategic,milli2019social,zrnic2021who,levanon2021strategic,chen2020learning,shavit2020causal,kleinberg2019how} study distribution shift caused by agent responses. The usual goal is a stable or Stackelberg-optimal predictor. Agent welfare, when it appears, is often a constraint on the institution~\citep{milli2019social,levanon2021strategic}. We focus on the human side. The same literature, with recommender choice-set design~\citep{dean2020preference,benporat2018game,hron2023modeling,ie2019reinforcement,schnabel2016recommendations} and algorithmic pricing~\citep{calvano2020artificial,auer2002finite}, is where surplus extraction lives. Performative power~\citep{hardt2022performative} quantifies how much a platform can steer. We treat that steering as sitting on a predicted threshold.

Entropy-Regularized Play and Coordination.
Maximum-entropy {IRL} and soft {RL}~\citep{ziebart2008maximum,haarnoja2018soft,jaynes1957information}, quantal response~\citep{mckelvey1995quantal}, and information theory~\citep{cover2006elements} all justify Gibbs distributions. We use them as the Shannon-Pareto baseline. For a human facing extraction the natural constraint is worst-case utility loss, and the maximum-entropy solution is uniform on the actions within $\alpha$ of the best payoff. Cooperative {IRL}~\citep{hadfield2016cooperative,carroll2019utility,russell2019human} instead emphasizes making humans easy to predict. We agree when the other side is a collaborator, and we locate where that advice reverses.

\section{Method}
\label{sec:method}

\subsection{Interaction and objectives}
We consider a $T$-round interaction between a human $\cH$ and an adaptive optimizer $\cO$. At round $t$, a context $x_t$ is realized and the history $h_{t-1}=(x_s,a_s,o_s)_{s<t}$ is public. The human chooses $a_t\in A$ from a finite action set and the optimizer chooses $o_t\in O$. Instantaneous payoffs are $u_H(a_t,o_t,x_t)$ and $u_O(a_t,o_t,x_t)$. The optimizer maintains a predictive distribution $P_t(\cdot\mid x_t,h_{t-1})$ on $A$, plays a best response that may still explore
\begin{equation}
\label{eq:opt-br}
o_t \in \arg\max_{o\in O}\ \EE_{a\sim P_t}\bigl[u_O(a,o,x_t)\bigr]
\end{equation}
or a bandit version of~\eqref{eq:opt-br}, such as UCB on posted prices or unique-best recommendation lists, and updates $P_t$ from $a_t$. In simultaneous-move games the human does not see the optimizer's current mix, and forms
\begin{equation}
\label{eq:mean-based}
\hat u_t(a) \;=\; \sum_{o\in O} \hat\sigma_t(o)\, u_H(a,o,x_t),
\end{equation}
where $\hat\sigma_t$ is the empirical mix of past optimizer actions in simultaneous bimatrix play. In sequential menus, namely pricing and recommendation, the human observes the posted $o_t$ before acting, so $\hat u_t$ is the displayed utility of the current menu rather than an average over past menus. Then $a_t\sim \pi_t(\cdot\mid \hat u_t,P_t)$.

\begin{definition}
The human's effectiveness is $V_H=\frac1T\sum_{t=1}^T \EE[u_H(a_t,o_t,x_t)]$. In surplus-extraction domains we report the analogous surplus over a reservation utility.
\end{definition}

\begin{definition}
Let $\pi$ be the user's mixed policy and $P$ a predictive distribution, with probabilities clipped at $10^{-12}$ before logarithms. Distributional learning error is $L(\pi,P)=\kl(\pi\|P)$, and realization predictability is the Bayes accuracy $R(\pi)=\max_a\pi(a)$, which upper-bounds every predictor's one-step hit rate. Write $V_O^\star(\pi)=\max_o\EE_{a\sim\pi}[u_O(a,o)]$ for the optimizer's exact best-response value and $o(P)$ for the action it plays from $P$. Optimizer regret is the oracle gap $G(\pi,P)=V_O^\star(\pi)-V_O(\pi,o(P))$, while extraction is $V_O^\star(\pi)$ itself, or the user's surplus loss relative to a menu that is not chosen to extract. A zero gap means only that the learned optimizer has reached the oracle best response, not that the policy cannot be exploited.
\end{definition}

\begin{proposition}
\label{prop:decomp}
For any $\pi$ and $P$ in the simplex, the best one-step hit rate satisfies $\max_P\PP_{a\sim\pi}(\arg\max P=a)=\max_a\pi(a)$. The expected log loss decomposes as $\EE_{a\sim\pi}[-\log P(a)]=H(\pi)+\kl(\pi\|P)$, so the excess over the entropy is zero if and only if $P=\pi$ on the support of $\pi$. Raw log loss and raw top-1 accuracy therefore cannot, by themselves, show that a predictor failed to learn the mixture.
\end{proposition}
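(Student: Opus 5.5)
The plan is to prove the two claims of Proposition~\ref{prop:decomp} separately. The hit-rate claim is a maximization over deterministic predictions. The log-loss claim is an algebraic identity followed by Gibbs' inequality.

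For the hit rate, fix any $P$ and a tie-breaking rule, so that $\arg\max P$ is a single action $\hat a(P)\in A$. Then $\PP_{a\sim\pi}(\hat a(P)=a)=\pi(\hat a(P))\le\max_a\pi(a)$. For the reverse inequality we choose $P=\pi$, or any $P$ whose argmax is a mode $a^\star$ of $\pi$, which attains $\pi(a^\star)$. If the argmax is set-valued and ties are broken at random, the hit probability is $\sum_a q(a)\pi(a)$ for some distribution $q$ over the argmax set. This is a convex combination of values of $\pi$, so it is still at most $\max_a\pi(a)$. The same averaging shows that no history-dependent or randomized predictor beats $\max_a\pi(a)$ for a single draw from $\pi$ that is independent of the predictor's randomness. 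This justifies calling $R(\pi)$ an upper bound on every predictor's one-step hit rate.

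For the log loss, write $\EE_{a\sim\pi}[-\log P(a)]=\sum_{a:\pi(a)>0}\pi(a)\bigl[-\log\pi(a)+\log\tfrac{\pi(a)}{P(a)}\bigr]=H(\pi)+\kl(\pi\|P)$, where the sums run over the support of $\pi$. If $P(a)=0$ for some $a$ in the support, both sides are $+\infty$. Under the paper's $10^{-12}$ clipping both sides are finite, and the identity holds for the clipped $P$. The ``if and only if'' statement then reduces to $\kl(\pi\|P)\ge0$ with equality iff $P=\pi$ on $\mathrm{supp}(\pi)$. This follows from Jensen's inequality applied to the strictly concave $\log$:
\[
-\kl(\pi\|P)=\sum_{a\in\mathrm{supp}\,\pi}\pi(a)\log\tfrac{P(a)}{\pi(a)}\le\log\sum_{a\in\mathrm{supp}\,\pi}P(a)\le 0 .
\]
Equality in the first step requires $P(a)/\pi(a)$ to be constant on the support. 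Equality in the second requires $P$ to put all its mass on the support. Together these force the constant to be $1$, so $P=\pi$ there; conversely, $P=\pi$ on the support gives $\kl(\pi\|P)=0$.

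The final sentence of the proposition is a corollary of these two facts. A predictor with $P=\pi$ exactly still incurs log loss $H(\pi)$ and top-1 accuracy $\max_a\pi(a)$. For a \unop{} mixture that is uniform on $\cA_\alpha(u)$, these equal $\log|\cA_\alpha(u)|$ and $1/|\cA_\alpha(u)|$. A large raw loss or a low accuracy is therefore consistent with zero learning error, and only the excess $\kl(\pi\|P)$ diagnoses failure.

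The main obstacle is the careful handling of boundary conventions rather than any deep step. These are ties in $\arg\max$, zero-probability actions outside the support, where $0\log 0=0$, and the clipping convention. I would state the equality case as $P=\pi$ on $\mathrm{supp}(\pi)$, which also implies that $P$ vanishes off the support. I would also note that clipping makes the equality case approximate, holding only up to $10^{-12}$.
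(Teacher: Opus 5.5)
Your proposal is correct and follows the same route as the paper: bound the hit rate of any named action by $\pi(a)\le\max_a\pi(a)$ with equality at a mode, then split the log loss algebraically into $H(\pi)+\kl(\pi\|P)$ and use that the KL term vanishes exactly when $P=\pi$ on $\mathrm{supp}(\pi)$. Your extras are the Jensen proof of Gibbs' inequality, the randomized tie-breaking case, and the explicit observation that $P=\pi$ still incurs loss $H(\pi)$ and accuracy $\max_a\pi(a)$. These only fill in steps that the paper treats as standard.
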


These are operational measures: they are what an adaptive {AI} system computes. We do not equate unpredictability with differential privacy~\citep{dwork2006differential,duchi2013local}. Write $u\in\R^{|A|}$ for the current expected-utility vector, $u^\star=\max_a u(a)$, and
\begin{equation}
\label{eq:alpha-set}
\cA_\alpha(u)\;=\;\{a\in A:u(a)\ge u^\star-\alpha\}.
\end{equation}
Every action in $\cA_\alpha(u)$ loses at most $\alpha$ relative to the best one-step action. If $|\cA_\alpha(u)|=1$, any policy supported on that singleton is a $\delta$-peak with $\delta=1$, aside from how ties are broken, and effectiveness forces predictability. Appendix~\ref{app:extra} measures robustness when $\hat u$ is a noisy version of $u$.

\subsection{Three cases and two tradeoffs}
\label{sec:theory}

\begin{proposition}
\label{prop:zerosum}
In a finite zero-sum game, any pure action whose security level is below the game value $v^\star$ is exploitable by an exact best responder. In matching pennies, every deterministic policy has value $-1$. More generally, let $\mathcal{S}=\{\pi\in\Delta(A):\min_o u_H(\pi,o)\ge v^\star\}$ be the security-strategy polytope. There is a game-dependent constant $\kappa_G<\infty$ such that every policy with value at least $v^\star-\varepsilon$ satisfies
\[
\inf_{\pi^\star\in\mathcal{S}}\tv(\pi,\pi^\star)\le \kappa_G\varepsilon.
\]
Thus, in games such as matching pennies and rock, paper, scissors, whose security set contains no peaked policy, sufficiently effective play must remain non-peaked. The constant and the conclusion are game dependent. Zero-sum structure alone does not imply that every game has a unique or fully mixed security strategy.
\end{proposition}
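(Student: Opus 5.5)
The proposition has three parts, and I would prove them in order: two short game-theoretic facts, then the quantitative stability bound, then the corollary about peaked policies.

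\textbf{Pure actions and matching pennies.} For the first claim, let $a$ be a pure action with security level $s(a)=\min_o u_H(a,o)<v^\star$. An exact best responder, as in~\eqref{eq:opt-br} with $P_t=\delta_a$ and $u_O=-u_H$, picks $o\in\arg\min_o u_H(a,o)$. This holds the human to $s(a)<v^\star$, so the action is exploitable. For matching pennies, every deterministic policy (conditioned on the public history) is, at each round, a point mass that the best responder identifies. The responder then mismatches or matches as needed, and each round pays $-1$, so the average is $-1$.

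\textbf{The TV bound.} I would use the structure of the value function $f(\pi)=\min_o u_H(\pi,o)$. It is the minimum of finitely many linear functions, hence concave and piecewise linear on $\Delta(A)$, and $\mathcal{S}=\{f\ge v^\star\}$ is a nonempty polytope by the minimax theorem. The claim is then an error bound of Hoffman type for the finite linear system $\{\pi\in\Delta(A): u_H(\pi,o)\ge v^\star\ \forall o\}$.

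\begin{itemize}
\item If $f(\pi)\ge v^\star-\varepsilon$, then $\pi$ violates each constraint by at most $\varepsilon$.
\item Hoffman's lemma, applied with the simplex equalities and nonnegativity included in the system, gives a constant $c_G$ depending only on the payoff matrix such that
\[
\mathrm{dist}_1(\pi,\mathcal{S})\le c_G\,\bigl\|\bigl(v^\star-u_H(\pi,\cdot)\bigr)_+\bigr\|_\infty\le c_G\,\varepsilon .
\]
\item Since $\tv=\tfrac12\|\cdot\|_1$, we may take $\kappa_G=c_G/2$. Every norm on $\R^{|A|}$ is equivalent, so only the constant changes.
\end{itemize}

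The main obstacle is making sure Hoffman's constant is finite and independent of $\pi$. That holds because it depends only on the constraint matrix, which is fixed by the game. If one prefers to avoid citing Hoffman, there is an alternative via piecewise linearity. Let $g(\pi)=\mathrm{dist}(\pi,\mathcal{S})$ and note that $v^\star-f$ is positive off $\mathcal{S}$. Both are piecewise linear or polyhedral, and the ratio $g/(v^\star-f)$ is bounded on the finitely many cells of a common polyhedral subdivision. On each cell the ratio is bounded by comparing vertices, taking care near $\mathcal{S}$, where both quantities vanish linearly.

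\textbf{Non-peaked corollary.} Suppose every $\pi^\star\in\mathcal{S}$ satisfies $\max_a\pi^\star(a)\le 1-\eta$ for some $\eta>0$; by compactness of $\mathcal{S}$ this holds for matching pennies ($\eta=1/2$) and rock, paper, scissors ($\eta=2/3$), where the security strategy is unique and uniform. Since $\max_a\pi(a)$ is $1$-Lipschitz in $\tv$, any policy with value at least $v^\star-\varepsilon$ has
\[
\max_a\pi(a)\le 1-\eta+\kappa_G\varepsilon .
\]
It therefore stays non-peaked for $\varepsilon<\eta/\kappa_G$.

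Finally, I would note that $\kappa_G$ and $\eta$ are genuinely game-dependent. A game with a pure saddle point has $\eta=0$, which explains the closing caveat in the statement.
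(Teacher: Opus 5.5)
Your proposal is correct and follows the paper's route. The paper also writes the security set as a feasible linear system, notes that an $\varepsilon$-effective policy has constraint residual at most $\varepsilon$, and applies Hoffman's error bound to get $\ell_1$ distance at most $C_G\varepsilon$, hence $\kappa_G=C_G/2$; pure actions are exploited through their security level $\min_o u_H(a,o)$. The only difference is that the paper checks the non-peaked conclusion by direct computation in matching pennies (security value $-2|d|$ at distance $|d|$ from uniform), whereas your $\eta$-margin combined with the $1$-Lipschitz bound on $\max_a\pi(a)$ gives it for any game whose security set contains no point mass.
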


The proof is standard security-level reasoning~\citep{osborne1994course}. We record it in Appendix~\ref{app:proofs}. Empirically, greedy obtains mean payoff $-0.143$ in our zero-sum suite, while \unop{} sits at $0.000$ and \aunop{} at $+0.054$.

\begin{proposition}
\label{prop:common}
Suppose $u_H=u_O=u$ and $(a^\star,o^\star)$ is the unique global payoff-maximizing pure profile. Let $u^\star=u(a^\star,o^\star)$ and $\bar u=\max_{a\ne a^\star,o}u(a,o)<u^\star$. Then the effectiveness-maximizing human policy is a point mass on $a^\star$, paired with $o^\star$. For any mixture with $\pi(a^\star)=p<1$, and for any partner action, including a best response, expected common payoff is at most $p u^\star+(1-p)\bar u$, hence it pays at least the linear coordination tax $(1-p)(u^\star-\bar u)$. A Shannon constraint $H(\pi)\ge h>0$ therefore forces $p\le p^\star(h)<1$.
\end{proposition}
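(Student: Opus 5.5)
The plan is to prove the three claims of Proposition~\ref{prop:common} in order, using only the payoff ordering $\bar u<u^\star$ and the concavity properties of Shannon entropy.

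\emph{Optimality of the point mass.} Fix any human mixture $\pi$ and any partner policy $\sigma\in\Delta(O)$, which may depend on $\pi$ (for instance a best response). Expected common payoff is $\sum_{a,o}\pi(a)\sigma(o)u(a,o)$. I would split this sum into the term $a=a^\star$ and the terms $a\ne a^\star$.
\begin{itemize}
\item For $a=a^\star$, bound $u(a^\star,o)\le u^\star$ for every $o$, since $u^\star$ is the global maximum.
\item For $a\ne a^\star$, bound $u(a,o)\le\bar u$ by the definition of $\bar u$.
\end{itemize}
Summing over $o$ with weights $\sigma(o)$ gives expected payoff at most $p u^\star+(1-p)\bar u$, where $p=\pi(a^\star)$. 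This bound is linear and strictly increasing in $p$ because $\bar u<u^\star$. At $p=1$ it is attained by pairing $a^\star$ with $o^\star$. So the point mass on $a^\star$ paired with $o^\star$ achieves $u^\star$, and no other human policy can. Uniqueness of the maximizing profile is what makes the comparison strict when $p<1$.

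\emph{The linear tax.} This is a one-line rewrite: $u^\star-(p u^\star+(1-p)\bar u)=(1-p)(u^\star-\bar u)$. Because the upper bound holds for every partner action, the tax holds even against a cooperative best response.

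\emph{The entropy claim.} This is the step needing the most care. Given $\pi(a^\star)=p$, entropy is maximized by spreading the remaining mass $1-p$ uniformly over the other $|A|-1$ actions. I would justify this either by the grouping rule or by concavity together with symmetry. That gives
\[
H(\pi)\le h_b(p)+(1-p)\log(|A|-1)=:\phi(p),
\]
where $h_b$ is the binary entropy. The function $\phi$ is continuous on $[0,1]$ and satisfies $\phi(1)=0$.
\begin{itemize}
\item If $H(\pi)\ge h>0$, then $\phi(p)\ge h$.
\item Define $p^\star(h)=\max\{p\in[0,1]:\phi(p)\ge h\}$. This maximum exists because the set is closed and, for feasible $h$, nonempty.
\item Continuity of $\phi$ and $\phi(1)=0<h$ give $p^\star(h)<1$.
\end{itemize}
Optionally, I would note that $\phi$ is decreasing on $[1/|A|,1]$, so $p^\star(h)$ is the root of $\phi(p)=h$ on that interval. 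If $h$ exceeds $\log|A|$ the constraint is infeasible and the claim is vacuous. The main subtlety is making the maximum-entropy-given-$p$ bound and the definition of $p^\star$ precise. Combined with the tax bound, this yields a strictly positive forced coordination cost of at least $(1-p^\star(h))(u^\star-\bar u)$.
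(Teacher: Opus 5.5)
Your proposal is correct and follows essentially the same route as the paper. Both split the expected payoff into the $a^\star$ term and the $a\ne a^\star$ terms to get the bound $p u^\star+(1-p)\bar u$, and both use the grouping bound $H(\pi)\le \mathrm{h}_2(p)+(1-p)\log(|A|-1)$, which tends to zero as $p\to1$, for the entropy claim. Your direct treatment of a mixed partner policy and your explicit definition of $p^\star(h)$ as the maximum of a closed superlevel set are minor refinements of the paper's argument.
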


In our common-interest suite, greedy, \unop{} with small $\alpha$, and \aunop{} all achieve payoff $2.598$, with predictive accuracy $1.0$. Uniform falls to $1.762$.

The two natural ways of asking a policy to be random but not too costly are not interchangeable. One constrains every atom of the support. The other constrains only the mean.

\begin{theorem}
\label{thm:frontiers}
Fix $u\in\R^{|A|}$ and write $\Delta(A)$ for the simplex. Consider
\begin{align}
\label{eq:wc}
(\mathrm{WC}_\alpha)\qquad
&\max_{\pi\in\Delta(A)}\ H(\pi)
\quad\text{s.t.}\quad
\pi(a)=0 \text{ whenever } u(a)<u^\star-\alpha,\\
\label{eq:sh}
(\mathrm{SH}_h)\qquad
&\max_{\pi\in\Delta(A)}\ \pi\cdot u
\quad\text{s.t.}\quad
H(\pi)\ge h.
\end{align}
\begin{enumerate}
\item Per-action worst-case loss. $(\mathrm{WC}_\alpha)$ is uniquely solved by $\pi=\mathrm{Unif}(\cA_\alpha(u))$, with value $H=\log|\cA_\alpha(u)|$ and $\EE_\pi[u]\ge u^\star-\alpha$. Conversely, a policy places positive mass outside $\cA_\alpha$ if and only if it violates the support-wise guarantee $\max_{a\in\mathrm{supp}(\pi)}(u^\star-u(a))\le\alpha$. This converse concerns the loss of every action that may be played. A policy with very small mass outside $\cA_\alpha$ may still have expected loss below $\alpha$.
\item Shannon constraint. For any $h\in\bigl(H(\mathrm{Unif}(\arg\max u)),\log|A|\bigr]$, $(\mathrm{SH}_h)$ is uniquely solved by a Gibbs distribution $\pi_\tau(a)\propto \exp(u(a)/\tau)$ for some $\tau>0$~\citep{jaynes1957information,cover2006elements}. Equivalently, $\pi_\tau$ maximises $\pi\cdot u+\tau H(\pi)$ over $\Delta(A)$. For $h\le H(\mathrm{Unif}(\arg\max u))$, every expected-utility maximizer with entropy at least $h$ is supported on $\arg\max u$.
\end{enumerate}
\end{theorem}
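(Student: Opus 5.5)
For part~1 the plan is to reduce $(\mathrm{WC}_\alpha)$ to unconstrained entropy maximization on a fixed finite set. The support constraint says exactly that $\pi$ lies in the face $\Delta(\cA_\alpha(u))$ of the simplex. This set is nonempty because $\arg\max u\subseteq\cA_\alpha(u)$. On this face, the standard bound $H(\pi)\le\log|\mathrm{supp}(\pi)|\le\log|\cA_\alpha(u)|$ holds. Equality holds iff $\pi$ is uniform on all of $\cA_\alpha(u)$, either by Jensen applied to the strictly concave $\log$, or equivalently by $\log|\cA_\alpha|-H(\pi)=\kl(\pi\|\mathrm{Unif}(\cA_\alpha))\ge0$. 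This gives existence, uniqueness and the value. The bound $\EE_\pi[u]\ge u^\star-\alpha$ follows because it is an average of terms each at least $u^\star-\alpha$. The converse is a restatement of definition~\eqref{eq:alpha-set}. Indeed, $a\notin\cA_\alpha(u)$ iff $u^\star-u(a)>\alpha$, so positive mass outside $\cA_\alpha$ is equivalent to the support-wise maximum loss exceeding $\alpha$. The closing remark only needs a one-line example: mass $\epsilon$ on an action with loss $\alpha+1$, with the remaining mass on $\arg\max u$, gives expected loss $\epsilon(\alpha+1)<\alpha$ for small $\epsilon$.

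For part~2, I would use Lagrangian duality on a concave program. The objective $\pi\cdot u$ is linear and the feasible set $\{H\ge h\}\cap\Delta(A)$ is convex, since $H$ is concave. Slater's condition holds whenever $h<\log|A|$, because the uniform distribution is strictly feasible. The case $h=\log|A|$ forces $\pi$ uniform, which is $\pi_\tau$ with $\tau=\infty$; for this endpoint I would note that the claim "for some $\tau>0$" needs either the convention $\tau=\infty$ or the observation that the case is degenerate. I will flag this explicitly. Next, I show that the constraint is active when $h>h_0:=H(\mathrm{Unif}(\arg\max u))$. Every maximizer of $\pi\cdot u$ over the whole simplex is supported on $\arg\max u$, so by part~1's bound its entropy is at most $h_0<h$. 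The optimum is therefore not the unconstrained one, and linearity then forces $H(\pi)=h$ at the optimum: if $H(\pi)>h$, one can move slightly toward an argmax vertex and strictly improve while staying feasible. KKT then gives a multiplier $\lambda>0$ such that the optimum maximizes $\pi\cdot u+\lambda H(\pi)$ over $\Delta(A)$. This function is strictly concave, so its maximizer is unique. The usual first-order computation identifies it as the Gibbs form $\pi_\tau$ with $\tau=\lambda$.

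To get uniqueness of the solution of $(\mathrm{SH}_h)$ itself, not only of the penalized problem, I would use the following fact: any optimal $\pi$ of $(\mathrm{SH}_h)$ must also maximize the Lagrangian at the saddle multiplier, and the Lagrangian maximizer is unique. To complete the picture I would also check that $\tau\mapsto H(\pi_\tau)$ is continuous and strictly increasing from $h_0$ (as $\tau\to0^+$) to $\log|A|$ (as $\tau\to\infty$), at least when $u$ is nonconstant. This gives a matching $\tau$ for each $h$ in the stated interval. The monotonicity follows from $\frac{d}{d\tau}H(\pi_\tau)=\mathrm{Var}_{\pi_\tau}(u)/\tau^3>0$. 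The degenerate case where $u$ is constant makes the interval empty, so it needs no treatment.

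For the last claim, suppose $h\le h_0$. Then $\mathrm{Unif}(\arg\max u)$ is feasible and attains $u^\star$, so the optimal value is $u^\star$. Every maximizer has $\pi\cdot u=u^\star$, which forces $\mathrm{supp}(\pi)\subseteq\arg\max u$. I expect the main obstacle to be the duality step in part~2. Specifically, it requires justifying strong duality and the activeness of the constraint cleanly, and handling the boundary issues: the endpoint $h=\log|A|$, and Gibbs weights being strictly positive, which is consistent with the KKT treatment of the simplex boundary via the entropy's infinite slope at $0$.
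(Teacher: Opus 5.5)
Your proposal is correct. Part~1 is the paper's argument: Jensen on the face $\Delta(\cA_\alpha(u))$ together with the support-wise equivalence. In Part~2, however, you reach the Gibbs form from the opposite direction.

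The paper's proof takes the sufficiency direction and leaves the optimality check implicit. It notes that $\pi_\tau$ uniquely maximizes $\pi\cdot u+\tau H(\pi)$. It asserts that $\tau\mapsto H(\pi_\tau)$ is continuous and onto the interval for $\tau\in(0,\infty]$, so some $\pi_\tau$ has entropy exactly $h$. That point is optimal because $\pi\cdot u\le\pi_\tau\cdot u+\tau\bigl(H(\pi_\tau)-H(\pi)\bigr)\le\pi_\tau\cdot u$ for every feasible $\pi$. Uniqueness is credited to strict concavity of $H$.

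You instead argue by necessity. Slater and strong duality give a multiplier. It is positive, since $\lambda=0$ would make the optimum an unconstrained maximizer supported on $\arg\max u$. Every primal optimum must then maximize the strictly concave Lagrangian, so it is the unique Gibbs point.

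What each route buys:
\begin{itemize}
\item The sufficiency route needs no constraint qualification, so $h=\log|A|$ is simply the $\tau=\infty$ end of the path. Yours must handle that endpoint separately, because Slater fails there.
\item In exchange, yours shows directly that every optimum is Gibbs and yields $\tau=\lambda$ with no intermediate-value step.
\item Your computation $\frac{d}{d\tau}H(\pi_\tau)=\mathrm{Var}_{\pi_\tau}(u)/\tau^3$ is therefore optional for your argument. It does, however, prove the ``onto'' claim the paper only asserts, and it shows the matching $\tau$ is unique.
\end{itemize}

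Your endpoint flag is right. For nonconstant $u$, the solution at $h=\log|A|$ is $\mathrm{Unif}(A)$, which is not a Gibbs distribution for any finite $\tau$. So ``some $\tau>0$'' in the statement must be read as allowing $\tau=\infty$, as the paper's proof implicitly does.

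Finally, your argument for $h\le H(\mathrm{Unif}(\arg\max u))$ covers that whole range, whereas the paper's proof only remarks on $h=0$.
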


$(\mathrm{WC}_\alpha)$ maximises a strictly concave objective on a simplex face and is therefore uniquely solved. $(\mathrm{SH}_h)$ maximises a linear objective over the strictly convex set of policies whose entropy is at least $h$. Uniqueness of the Gibbs solution on $\bigl(H(\mathrm{Unif}(\arg\max u)),\log|A|\bigr]$ follows from that shape. KKT stationarity for $(\mathrm{WC}_\alpha)$ on the relative interior of $\Delta(\cA_\alpha)$ is $-\log\pi(a)-1+\lambda=0$, so $\pi$ is constant on $\cA_\alpha$. Equivalently, $\mathrm{Unif}(\cA_\alpha)$ is the $I$-projection of $\mathrm{Unif}(A)$ onto $\{\pi:\mathrm{supp}(\pi)\subseteq\cA_\alpha\}$, since $\kl(\pi\|\mathrm{Unif}(A))=\log|A|-H(\pi)$. For $(\mathrm{SH}_h)$, stationarity of $\pi\cdot u+\tau H(\pi)$ is $u(a)-\tau(\log\pi(a)+1)+\mu=0$, hence Gibbs.

The programs coincide only in special cases, when a single $\alpha$-level set is already $A$, or when $h=0$. For any $\tau>0$, $\pi_\tau(a)>0$ for every $a$. If $\cA_\alpha\neq A$, write $\gamma=\min_{a\notin\cA_\alpha}(u^\star-\alpha-u(a))>0$. Matching bounds on the normalizer $Z=\sum_b e^{u(b)/\tau}$ give
\begin{equation}
\label{eq:gibbs-leak}
\frac1{|A|}\exp\bigl(-(\alpha+\gamma)/\tau\bigr)
\;\le\;
\pi_\tau\bigl(A\setminus\cA_\alpha\bigr)
\;\le\;
\frac{|A\setminus\cA_\alpha|}{|\cA_\alpha|}\exp(-\gamma/\tau).
\end{equation}
The leak is acceptable as exploration in {RL}. Against an extractor it is a teaching signal. Unrestricted Boltzmann and $\varepsilon$-greedy put mass outside $\cA_\alpha$, including individually irrational actions. The pricing comparisons in Section~\ref{sec:results} use two different baselines: {IR}-Boltzmann refuses purchases above value but can still leave $\cA_\alpha$, and support-matched Boltzmann is renormalized on $\cA_\alpha^{\mathrm{IR}}$. Appendix~\ref{app:ir} compares \unop{} with and without the reservation floor.

\begin{theorem}
\label{thm:lockin}
Let $P_t$ be the Dirichlet-multinomial posterior mean with a uniform prior on a finite action set, and let actions be drawn i.i.d.\ from a fixed $\pi$. Then $P_t\to\pi$ almost surely, so $\kl(\pi\|P_t)\to 0$ on the support of $\pi$ and $\max_a P_t(a)\to\max_a\pi(a)$. If $\pi$ is a $\delta$-peak with $\delta>1/2$, the posterior mode equals that action after $O\bigl(\log(1/\eta)/(\delta-1/2)^2\bigr)$ observations with probability at least $1-\eta$. If $\pi$ is uniform on $k$ actions, the same convergence leaves $\max_a P_t(a)\to 1/k$. The mixture is identified, and the one-step hit rate remains bounded by the policy. The proof records how the visit counts grow.
\end{theorem}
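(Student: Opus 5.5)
The plan has four steps: write the estimator explicitly, get almost-sure convergence from the strong law, get the lock-in time from a concentration bound on count differences, and read off the uniform case.

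\textbf{Explicit form of the estimator.} With counts $N_t(a)=\sum_{s\le t}\mathbf{1}\{a_s=a\}$ and a uniform Dirichlet prior (all pseudo-counts equal to $1$), the posterior mean is
\[
P_t(a)=\frac{N_t(a)+1}{t+|A|}.
\]
This formula carries the whole argument. Each $N_t(a)$ is a sum of i.i.d.\ Bernoulli$(\pi(a))$ indicators.

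\textbf{Almost-sure convergence.} By the strong law of large numbers, $N_t(a)/t\to\pi(a)$ almost surely for every $a$. Since $A$ is finite, a union over actions gives $P_t\to\pi$ almost surely, because the additive $+1$ and $+|A|$ are $O(1/t)$ perturbations. The remaining limits follow by continuity.
\begin{itemize}
\item \emph{KL on the support.} $\kl(\pi\|P_t)=\sum_{a\in\mathrm{supp}(\pi)}\pi(a)\log(\pi(a)/P_t(a))$. Each term is continuous at $P_t(a)=\pi(a)>0$, so the sum tends to $0$. The $10^{-12}$ clipping is irrelevant here, since $P_t(a)>0$ always.
\item \emph{Peak.} $\max_a P_t(a)\to\max_a\pi(a)$ because the max over a finite set is continuous.
\item \emph{Hit rate.} The statement that the one-step hit rate is bounded by the policy is exactly Proposition~\ref{prop:decomp}: no forecast, including $P_t$, can beat $\max_a\pi(a)$ on a fresh draw.
\end{itemize}

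\textbf{The $\delta$-peak rate.} This is the only quantitative step. Let $a^\star$ carry mass $\delta>1/2$. Because the prior is symmetric, the posterior mode is $a^\star$ exactly when $N_t(a^\star)>N_t(b)$ for every $b\ne a^\star$. That holds whenever $N_t(a^\star)>t/2$, since the other actions then share fewer than $t/2$ counts. So it suffices to bound
\[
\PP\bigl(N_t(a^\star)\le t/2\bigr)\le\exp\bigl(-2t(\delta-1/2)^2\bigr)
\]
by Hoeffding's inequality. Setting the right side to $\eta$ gives $t\ge\log(1/\eta)/(2(\delta-1/2)^2)$, which is the claimed $O\bigl(\log(1/\eta)/(\delta-1/2)^2\bigr)$. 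Ties under the $+1$ smoothing are excluded because the inequality is strict.

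\textbf{Uniform on $k$ actions.} If $\pi=\mathrm{Unif}$ on $k$ actions, the general limit specializes to $\max_a P_t(a)\to 1/k$. The point to record is that identification is complete, since $\kl(\pi\|P_t)\to0$, while Bayes accuracy stays at $1/k$.

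\textbf{Expected obstacle.} The main care point is the peak step. We must use the sufficient condition $N_t(a^\star)>t/2$ rather than a pairwise comparison with each $b$ followed by a union bound, which would cost a factor $|A|$. We also need to note that the threshold $1/2$ is what makes $\delta-1/2$ the natural margin. Everything else is the strong law plus continuity.
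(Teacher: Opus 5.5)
Your proposal is correct and follows the paper's proof essentially step for step. Both use the explicit posterior mean $(N_t(a)+1)/(t+|A|)$, the strong law plus continuity for the almost-sure, KL, and peak limits, and Hoeffding on the event $N_t(a^\star)\le t/2$ for the lock-in time. The only divergence is the uniform case: the paper also computes $\EE[\max_a N_t(a)]=t/k+\Theta\bigl(\sqrt{(t\log k)/k}\bigr)$ to record how fast the counts fluctuate, whereas you read the limit $1/k$ directly off the almost-sure convergence, which is enough for the stated claim.
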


\begin{proposition}
\label{prop:price}
Let $\mathcal P$ be a price grid with spacing $\Delta$ and let the buyer use the constant-band response
\begin{equation}
\label{eq:unop-demand}
q_{\alpha,\rho}(p)=\begin{cases}
1 & p< v-\alpha,\\
\rho & v-\alpha \le p \le v,\\
0 & p>v,
\end{cases}
\end{equation}
with $\rho\in(0,1)$. Write $p^-_\alpha(v)=\max\{p\in\mathcal P:p<v-\alpha\}$ and $p^+_\alpha(v)=\max(\mathcal P\cap[v-\alpha,v])$, when those sets are nonempty. A seller who knows $q_{\alpha,\rho}$ and breaks revenue ties by lowest buyer surplus posts in the sure-demand region whenever
\begin{equation}
\label{eq:deter}
p^-_\alpha(v)>\rho\,p^+_\alpha(v).
\end{equation}
The posted price is then $p^-_\alpha(v)$ and buyer surplus is $v-p^-_\alpha(v)\in(\alpha,\alpha+\Delta]$. The guarantee uses the response curve itself and does not use prediction error. \unop{} is the case $\rho=1/2$. Let $r^\star=p^-_\alpha(v)/p^+_\alpha(v)$. Every $\rho<r^\star$ induces that same price, so those rates are exactly the welfare-maximizing constant-band commitments. Binary entropy on $(0,1)$ is maximized at $1/2$, which means that $\rho=1/2$ is the unique entropy-maximizing rate that keeps the seller below the band when $r^\star>1/2$. When $r^\star\le 1/2$ that set of rates, $(0,r^\star)$, is open, so no entropy-maximizing rate exists and entropy only increases as $\rho$ approaches $r^\star$ from below. The inequality $\alpha\le v/2$ is not enough for this conclusion, because if $v=2\alpha$ and $v\in\mathcal P$, then $p^+_\alpha(v)=v$ and $p^-_\alpha(v)<\alpha=v/2$, so~\eqref{eq:deter} fails at $\rho=1/2$ and the seller can post inside the band.
\end{proposition}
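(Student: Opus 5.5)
The plan is a direct revenue comparison on the grid. Expected revenue at price $p$ is $R(p)=p\,q_{\alpha,\rho}(p)$, and the grid splits into three regions. In the sure-demand region $\{p<v-\alpha\}$ we have $R(p)=p$, which is increasing in $p$, so the best price there is $p^-_\alpha(v)$ with revenue $p^-_\alpha(v)$. In the band $[v-\alpha,v]$ we have $R(p)=\rho p$, maximized at $p^+_\alpha(v)$ with revenue $\rho\,p^+_\alpha(v)$. Above $v$ revenue is zero. Condition~\eqref{eq:deter} says the first maximum strictly beats the second. Assuming $p^-_\alpha(v)>0$, which \eqref{eq:deter} forces, it also beats zero. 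The seller's unique revenue maximizer is therefore $p^-_\alpha(v)$, and the tie-breaking rule never binds.

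At that price purchase has probability one, so the realized action is deterministic. Buyer surplus is $v-p^-_\alpha(v)$. It exceeds $\alpha$ because $p^-_\alpha(v)<v-\alpha$. It is at most $\alpha+\Delta$ provided the grid point just above $p^-_\alpha(v)$, namely $p^-_\alpha(v)+\Delta$, is at least $v-\alpha$. This holds by maximality whenever the grid extends to $v-\alpha$, which I will state as the standing grid assumption implicit in the definition of $p^+_\alpha(v)$ (the band contains a grid point). None of this uses the predictor $P_t$, only $q_{\alpha,\rho}$, which gives the ``no prediction error'' remark.

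For the rate characterization, \eqref{eq:deter} is equivalent to $\rho<r^\star$. Every such $\rho$ yields the same price and the same surplus. Conversely, if $\rho>r^\star$ the band revenue strictly exceeds the sure-demand revenue, and the seller posts $p^+_\alpha(v)$ with buyer surplus at most $\alpha$. At $\rho=r^\star$ there is a revenue tie, which is broken toward lower buyer surplus, that is, toward the band. To conclude that the band rates are welfare-worse, I compare buyer surplus. In the band the expected surplus is $\rho\,(v-p^+_\alpha(v))\le\rho\alpha<\alpha<v-p^-_\alpha(v)$. This shows that $(0,r^\star)$ is exactly the set of welfare-maximizing constant-band commitments.

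The entropy claims then reduce to one-variable facts about binary entropy $h_2(\rho)$, which is strictly concave with unique maximum at $1/2$. If $r^\star>1/2$, the point $1/2$ lies in $(0,r^\star)$ and is the unique maximizer there. If $r^\star\le1/2$, $h_2$ is strictly increasing on $(0,r^\star)$, and the supremum is approached only as $\rho\uparrow r^\star$, which is excluded, so no maximizer exists.

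For the closing counterexample, take $v=2\alpha\in\mathcal P$. Then $p^+_\alpha(v)=v$ and $p^-_\alpha(v)<v-\alpha=\alpha$, so $p^-_\alpha(v)<v/2=\tfrac12 p^+_\alpha(v)$ and \eqref{eq:deter} fails at $\rho=1/2$.

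The main obstacle is purely bookkeeping. The surplus upper bound $\alpha+\Delta$ and the welfare comparison at the boundary $\rho=r^\star$ require care with the grid assumptions and with the tie-breaking rule, and I would handle these explicitly.
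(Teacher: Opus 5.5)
Your proposal is correct and follows essentially the same route as the paper: a region-by-region revenue comparison, grid maximality for the surplus bound, monotonicity of binary entropy, and the $v=2\alpha$ counterexample. You are slightly more careful than the paper's write-up, since you prove the converse (every $\rho\ge r^\star$ leaves strictly less buyer surplus, which the paper does not show) and you use the correct non-strict bound $p^-_\alpha(v)\ge v-\alpha-\Delta$ where the paper states it strictly; the paper, in turn, checks that the \unop{} support with the reservation floor produces this band with $\rho=1/2$, a step you take as given.
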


Proposition~\ref{prop:price} is a commitment about prices that are not posted. Once the seller stays below the band, the action at the posted price $p^-_\alpha(v)$ is a sure purchase, and the mixture that moved the seller sits on prices that are not posted. On a uniform grid the strict inequality holds for $\rho=1/2$ when $\alpha<v/2-\Delta$, and it can fail on the boundary $\alpha=v/2$.

\subsection{Mixing only where it is safe}
Figure~\ref{fig:method} summarizes the pipeline. Individual rationality is part of the support: mixing into strictly loss-making actions, such as buying above value or clicking below reservation, teaches the optimizer to move the cutoff. Given $u$, slack $\alpha\ge 0$, and reservation $\underline{u}\in\{-\infty\}\cup\R$, define
\begin{equation}
\label{eq:ir-set}
\cA_\alpha^{\mathrm{IR}}(u)
\;=\;
\begin{cases}
\cA_\alpha(u)\cap\{a:u(a)\ge \underline{u}\}
& \text{if that intersection is nonempty},\\
\cA_\alpha(u)
& \text{otherwise}.
\end{cases}
\end{equation}
Exact arithmetic makes the first branch apply whenever the reservation set is nonempty, because $u^\star\ge\underline{u}$ then forces the global maximizer into the intersection. The second branch covers rare numerical cases. The reservation is $0$ for skip in posted pricing, $\tau$ in recommendation, and $-\infty$ in bimatrix games.

\begin{figure}[t]
\centering
\includegraphics[width=\linewidth]{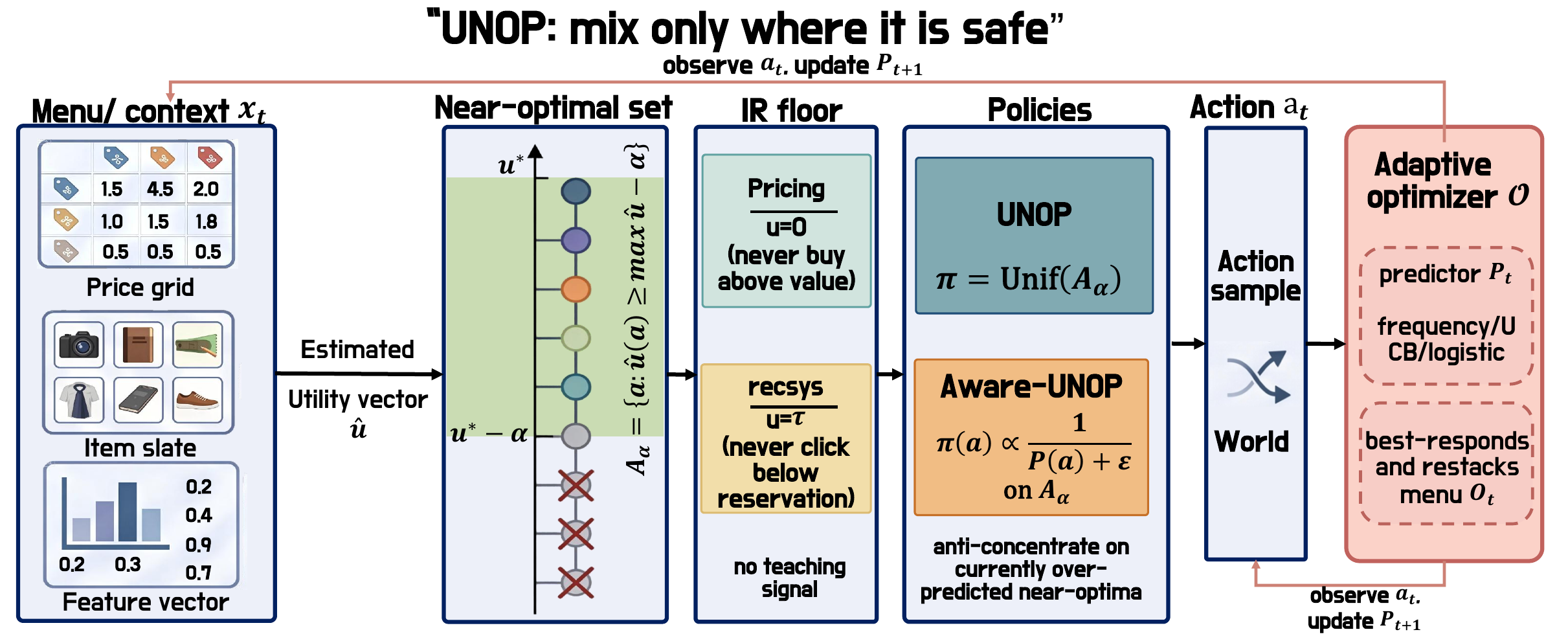}
\caption{UNOP pipeline. From the current menu the human forms $\hat u$, builds the $\alpha$-optimal set, intersects it with the reservation floor, and mixes uniformly on that set. The optimizer observes $a_t$, updates $P$, and revises the menu, and that update can converge to the mixture. The welfare effect comes from the best response to the mixture. In pricing, once the seller stays below the band, the relevant mixture is off the posted price.}
\label{fig:method}
\end{figure}

Unaware \unop{} is the unique solution of $(\mathrm{WC}_\alpha)$ on $\cA_\alpha^{\mathrm{IR}}$. When a forecast $P$ is available in the same simplex, \aunop{} replaces maximising $H(\pi)$ by a strictly concave objective that downweights actions the forecast already expects
\begin{equation}
\label{eq:aware}
\pi^{\mathrm{aware}}
\;\in\;
\arg\max_{\substack{\pi\in\Delta(A)\\ \mathrm{supp}(\pi)\subseteq\cA_\alpha^{\mathrm{IR}}}}
\Bigl( H(\pi) - \EE_{a\sim\pi}\bigl[\log\bigl(P(a)+\varepsilon\bigr)\bigr] \Bigr),
\end{equation}
with $\varepsilon>0$ a numerical floor. The unique maximizer is $\pi^{\mathrm{aware}}(a)\propto \bigl(P(a)+\varepsilon\bigr)^{-1}$ on $\cA_\alpha^{\mathrm{IR}}$, equivalently the $I$-projection
\begin{equation}
\label{eq:aware-kl}
\pi^{\mathrm{aware}}
\;=\;
\arg\min_{\pi\in\Delta(\cA_\alpha^{\mathrm{IR}})}\ \kl\bigl(\pi\,\big\|\,\nu_P\bigr),
\qquad
\nu_P(a)\;\propto\;\bigl(P(a)+\varepsilon\bigr)^{-1}.
\end{equation}
If $P$ is uniform on that set,~\eqref{eq:aware} reduces to unaware \unop{}. Algorithm~\ref{alg:unop} is the one-round map used in every experiment, and the surrounding loops are in Appendix~\ref{app:algs}. It is a good-enough mixer rather than a no-regret algorithm, and Hedge can be exploited by a patient optimizer~\citep{deng2019strategizing}. Across extraction domains, $\alpha$ on the order of $0.05$ to $0.15$ of the utility range is consistently near-best (Appendix Figure~\ref{fig:alpha}).

\section{Experiments}
\label{sec:exp}

\subsection{Experimental Settings}
\label{sec:setting}

Bimatrix games, posted pricing, and recommendation are the three interaction classes. The new comparisons below use an oracle seller that knows $q_\pi$, a Thompson seller, a stationary tax game, and a policy-aware slate optimizer. Seeds, horizons, and raw tables for the {UCB}, MovieLens, and tabular runs are in \texttt{supplement/legacy}. \aunop{} sees the true forecast $P_t$. Unaware \unop{} does not. Probabilities are clipped at $10^{-12}$ before every logarithm.

\subsection{Posted pricing}
\label{sec:results}
\label{sec:oracle}

\begin{figure}[t]
\centering
\includegraphics[width=0.78\linewidth]{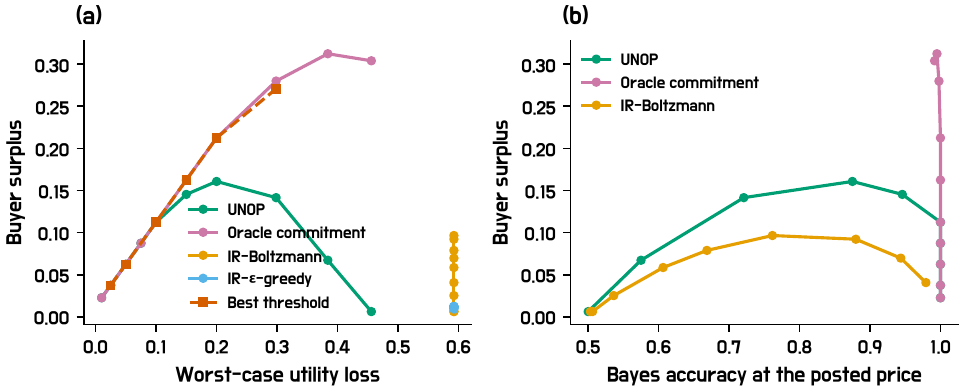}
\caption{Test-split oracle pricing. (a)~Buyer surplus against worst-case utility loss. \unop{} is the maximum-entropy support-safe point. The plotted {IR}-Boltzmann curve still leaves $\cA_\alpha$, while support-matched Boltzmann at the same $\alpha$ has surplus $\BoltMatch$, against $\UnopMatch$ for \unop{}. The oracle band commitment rises above \unop{} once $\rho=1/2$ fails~\eqref{eq:deter}. (b)~Bayes accuracy at the posted price can be high at the same time, because the mixture that keeps the seller below the band is off that price.}
\label{fig:frontier}
\end{figure}

We draw $200$ values $v\sim U[0.25,0.95]$ and hold out $120$ for test. The $\alpha$ we report is $0.15$, chosen before the test split is used. Mean-surplus selection on the validation split returns $0.20$, which raises the average and worsens the lower tail, so the test numbers stay at $0.15$.

The static test uses a demand oracle that knows $q_\pi$ and breaks revenue ties by lowest buyer surplus, then by highest price. On the $41$-point test grid the experimental unit is a value, greedy surplus is $\OracleGreedySurplus$, and \unop{} surplus is $\OracleUnopSurplus$. The paired difference is $\OracleUnopDiff$, with a $95\%$ bootstrap CI $[\OracleUnopLo,\OracleUnopHi]$ over $10{,}000$ resamples of the $\OracleN$ values, and the difference is positive on $\OracleUnopPos\%$ of them. A $10{,}001$-point average over values on $[0.25,0.95]$, with the $41$-point price grid held fixed, gives $\DenseGreedy$ and $\DenseUnop$. Bayes accuracy at the posted price is $\OracleUnopBayes$, because the purchase is sure whenever~\eqref{eq:deter} holds. The $10.8\%$ of test values on which \unop{} does not beat greedy are exactly the values on which $\rho=1/2$ fails~\eqref{eq:deter} and the seller enters the band. An oracle support-safe commitment, which may lower $\rho$ until~\eqref{eq:deter} holds, reaches mean surplus $\OracleWelfareSurplus$ on the same test split and $\DenseOracle$ under the same fine average. It uses the seller's revenue objective, so we treat it as an upper bound rather than a second rule a user could deploy. Support-restricted Boltzmann at the same $\alpha=0.15$ has surplus $\BoltMatch$, against $\UnopMatch$ for \unop{}, so the gain is carried by the hard support and the induced demand curve, with uniform mixing as the parameter-free maximum-entropy point of that class.

Table~\ref{tab:headline} places this static pricing comparison next to the online demand oracle, the two recommendation objectives, and the tax game. Green cells are the higher user payoff when the counterpart evaluates the response curve. The blue cell is quality ranking, where the list is not chosen against that curve and greedy remains ahead. The subsections below report these comparisons one setting at a time.

\begin{table}[t]
\centering
\caption{Headline user payoffs used in this section. Green is the higher payoff when the counterpart evaluates the response curve. Blue is quality ranking, where greedy stays ahead.}
\label{tab:headline}
\vspace{-0.4em}
{\small
\setlength{\tabcolsep}{8pt}
\renewcommand{\arraystretch}{1.05}
\begin{tabular}{lcc}
\hline
Comparison & Greedy & \unop{} \\
\hline
Static oracle pricing & $\OracleGreedySurplus$ & \cellcolor{cellgreen}$\OracleUnopSurplus$ \\
Online demand oracle & $\FamOracleG$ & \cellcolor{cellgreen}$\FamOracleU$ \\
Quality-ranking lists & \cellcolor{cellblue}$\RecVanillaGreedy$ & $\RecVanillaUnop$ \\
Policy-aware lists & $\RecExactGreedy$ & \cellcolor{cellgreen}$\RecExactUnop$ \\
Tax-game payoff & $\LockGreedyHuman$ & \cellcolor{cellgreen}$\LockHuman$ \\
\hline
\end{tabular}
}
\end{table}

Figure~\ref{fig:strength}a is a separate online comparison, with four sellers on $60$ random values for $300$ rounds, and those bars are not a ranked strength scale. Table~\ref{tab:sellers} records buyer surplus for greedy and \unop{} under each seller. Green marks where \unop{} pulls ahead. Under {UCB} the two policies stay close, because that seller is still walking down a $41$-point grid after $300$ rounds and has not yet evaluated the response curve. Under the demand oracle the per-price KL is $\FamOracleKL$.

\begin{table}[t]
\centering
\caption{Buyer surplus against four online sellers, $60$ values and $300$ rounds. Green marks a clear \unop{} gain. The {UCB} row stays close because that seller is still exploring.}
\label{tab:sellers}
\vspace{-0.4em}
{\small
\setlength{\tabcolsep}{8pt}
\renewcommand{\arraystretch}{1.05}
\begin{tabular}{lcc}
\hline
Seller & Greedy & \unop{} \\
\hline
{EXP3} & $\FamExpG$ & \cellcolor{cellgreen}$\FamExpU$ \\
{UCB} & $\FamUcbG$ & $\FamUcbU$ \\
Thompson sampling & $\FamThompG$ & \cellcolor{cellgreen}$\FamThompU$ \\
Demand oracle & $\FamOracleG$ & \cellcolor{cellgreen}$\FamOracleU$ \\
\hline
\end{tabular}
}
\end{table}

\begin{figure}[t]
\centering
\includegraphics[width=0.75\linewidth]{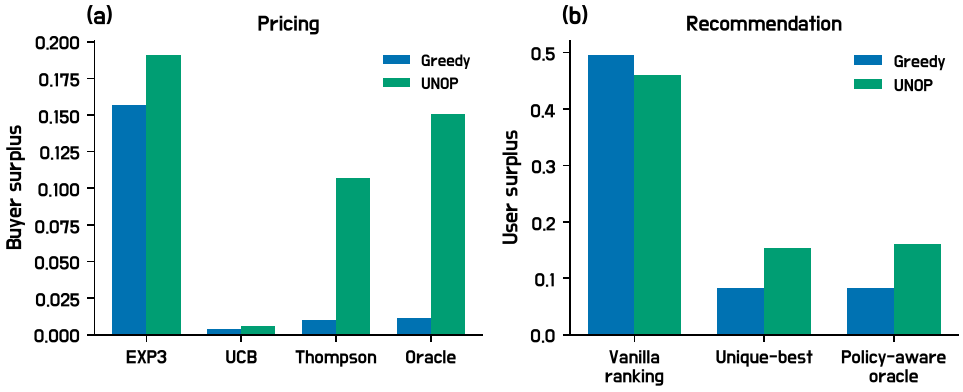}
\caption{(a)~Posted pricing under four online sellers, $60$ values and $300$ rounds, not a ranked strength scale. {EXP3} mixes exploration with importance-weighted updates, and {UCB} at this horizon is still exploring. Thompson sampling and the demand oracle are closer to a best response to $q_\pi$. (b)~Recommendation. The experimental unit is a catalog. Quality ranking favors greedy, and an exact policy-aware slate optimizer reverses the comparison.}
\label{fig:strength}
\end{figure}

\subsection{Recommendation}

On catalogs of $12$ items and slates of size $4$, every one of the $495$ slates is scored with the true $\pi(\cdot\mid S)$ over $160$ catalogs, and beam search of width $50$ matches that exact optimum. Quality ranking favors greedy, with paired difference $\RecVanDiff$ and $95\%$ CI $[\RecVanLo,\RecVanHi]$. The policy-aware oracle reverses the sign, so \unop{} minus greedy is $\RecExactDiff$, CI $[\RecExactLo,\RecExactHi]$, positive on $\RecExactPos\%$ of catalogs, with means $\RecExactUnop$ versus $\RecExactGreedy$. A separate $48$-item experiment, with slate size $8$, beam width $30$, and $80$ catalogs, gives paired difference $\RecBeamDiff$, CI $[\RecBeamLo,\RecBeamHi]$, positive on $\RecBeamPos\%$ of catalogs. A later latent-correlation sweep, using beam width $50$ and $60$ catalogs per cell, asks whether that gap depends on conflict between quality and margin (Appendix Figure~\ref{fig:conflict}). It does not trace support size. The policy-aware gap is largest under strong negative latent correlation, where at correlation $-0.8$ and reservation quantile $0.2$ the mean gap is $\ConflictNeg$, $95\%$ CI $[\ConflictNegLo,\ConflictNegHi]$. Under strong positive latent correlation the same gap is indistinguishable from zero ($\ConflictPos$, CI $[\ConflictPosLo,\ConflictPosHi]$). Several intermediate cells have a positive mean and a zero median, and quality ranking stays negative at every tested latent correlation from $-0.8$ to $0.8$. The sweep is about conflict, and it is not a map over how large the acceptable set is.

\subsection{Learning the mixture}

Figure~\ref{fig:learn} is a six-action stationary tax game with $100$ seeds. The user commits to a fixed mixture, the optimizer taxes the predicted mode, and the oracle taxes the true mode. For \unop{}, Bayes accuracy is $\LockBayes$, realized accuracy is $\LockAcc$, frequency KL is $\LockKL$, and optimizer regret is $\LockGap$, while user payoff is $\LockHuman$ against $\LockGreedyHuman$ for greedy. The predictor has reached the oracle best response, so what remains random is only the next draw from the learned mixture, and the user's payoff stays higher because that mixture changes which action is taxed.

\begin{figure}[t]
\centering
\includegraphics[width=0.75\linewidth]{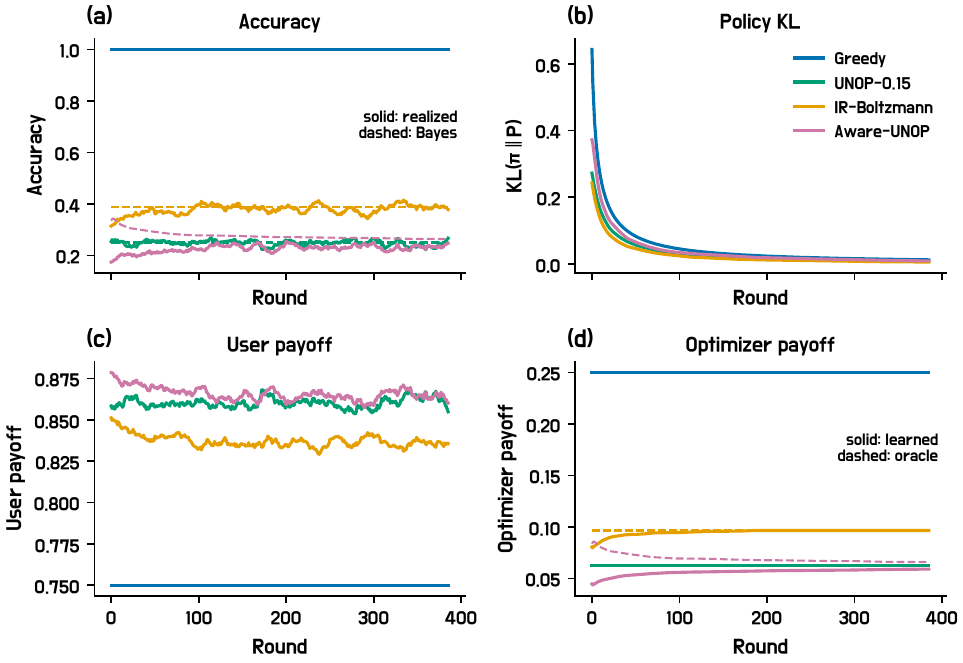}
\caption{Stationary tax game, frequency predictor, $100$ seeds, rolling means. Solid curves in (a) are realized accuracy and dashed curves are Bayes accuracy. Panel (b) is policy KL. In (d), solid curves are the learned optimizer and dashed curves are $V_O^\star(\pi)$, and their gap is optimizer regret rather than a measure of whether the policy can be exploited. The mixture is learned, the next draw stays at accuracy $1/k$, and user payoff stays above greedy.}
\label{fig:learn}
\end{figure}

\subsection{Discussion}

The pricing and recommendation gains come from the response curve the optimizer best-responds to. They are not evidence that the mixture stayed hidden. In the tax game the predictor's regret falls to zero, so the mixture has been identified, and the higher user payoff comes from which action that mixture makes costly to tax. The pricing misses line up with the same object. The held-out values on which uniform mixing does not beat greedy are the values on which a purchase rate of one half fails to keep the seller below the band, so those misses are named by the commitment condition rather than left as residual noise. Recommendation makes the contrast sharper. A ranker that ignores the response prefers the greedy list, while a slate optimizer that scores the true mixture reverses the comparison, and the reversal is largest when high quality and high margin pull apart. The short-horizon dynamic program is the boundary of the claim. Once the seller is still collecting samples instead of evaluating the curve, the optimal buyer collapses to greedy and the mixture can cost surplus.

Against a five-price {UCB} seller, exact dynamic programming at horizons $10$, $15$, $20$, and $25$ gives the optimal buyer and greedy the same value at every tested value, and \unop{} is weakly worse on several of those values, which fits a seller that has not yet evaluated the curve. A singleton acceptable set is a different failure. If $\cA_\alpha^{\mathrm{IR}}(S)$ is a singleton for every feasible menu $S$, \unop{} and greedy induce the same response, while a singleton only on the full candidate set still leaves the platform free to offer another menu. At $\alpha=0$ the two policies have identical buyer surplus on all $120$ test values, and the restricted MovieLens-100K cohort is a near-collapse, with mean difference $-0.0009$ over $33$ users. Utility noise $\hat u(a)=u(a)+\xi$, $\xi\sim\mathcal N(0,\sigma^2)$, against the oracle seller at $\alpha=0.15$ leaves surplus $\NoiseZero$ at $\sigma=0$ and $\NoiseThirty$ at $\sigma=0.30$, with a true loss-above-$\alpha$ rate of $\NoiseViol$. A practical reading is therefore narrow. Keep the mixture when the platform will best-respond to the whole curve and the menu still leaves more than one acceptable action, and drop it when the estimate of that set is wrong or the other side is a collaborator.

\section{Conclusion}
\label{sec:conc}

The menu moves because the platform best-responds to a response curve. \unop{} mixes uniformly on near-optimal, individually rational actions, and in pricing that mixture can keep a seller below the randomized band even after the curve is known. The gain appears against an extractive counterpart that evaluates the curve, and it disappears when the acceptable set collapses, the utility estimate is badly wrong, or the other side is still exploring.

\clearpage
\subsection*{Ethics statement}
This work studies how humans can avoid having their preferences precisely identified by adaptive pricing and recommendation systems. The methods can resist targeting that takes surplus and, in principle, useful personalization. We do not provide tools for attacking deployed systems. MovieLens-100K/1M~\citep{harper2015movielens}, {UCI} Adult~\citep{kohavi1996scaling}, German Credit~\citep{hofmann1994german}, FICO {HELOC}, {UCI} Credit Card, Give Me Some Credit, and {UCI} Bank Marketing are used as public research files. Any real-world assistant implementing \unop{} should make $\alpha$ user-visible and disable mixing in explicitly cooperative modes, as in Proposition~\ref{prop:common}.

\subsection*{Reproducibility statement}
Code and raw outputs for every reported comparison are at \url{https://github.com/FrankDengAI/unop} and in the supplement. Oracle pricing, the stationary tax game, dynamic programming, utility noise, the seller family, policy-aware recommendation, the grid condition, the fine value-grid average, the latent-correlation sweep, and the singleton-support runs are under \texttt{supplement/sprint}, while bimatrix, MovieLens, and tabular outputs are under \texttt{supplement/legacy}. MovieLens-100K \texttt{u.data} is included, and MovieLens-1M together with the OpenML tabular sources are fetched by \texttt{download\_data.py} and \texttt{run\_public.py} when absent. Values for the oracle suite use seed $20260926$, and the test split is the last $60\%$ of $200$ draws. Commands and package versions are in \texttt{supplement/README.md}. No GPU is required. Proofs are in Appendix~\ref{app:proofs}.

\subsection*{AI use statement}
Artificial intelligence tools assist in formatting and text polishing. The authors take full responsibility for the final content.

\bibliography{iclr2027_conference}
\bibliographystyle{iclr2027_conference}

\clearpage
\appendix
\setcounter{table}{0}
\setcounter{figure}{0}
\setcounter{algorithm}{0}
\renewcommand{\thetable}{A\arabic{table}}
\renewcommand{\thefigure}{A\arabic{figure}}
\renewcommand{\thealgorithm}{A\arabic{algorithm}}
\providecommand{\theHtable}{\thetable}
\providecommand{\theHfigure}{\thefigure}
\renewcommand{\theHtable}{A.\arabic{table}}
\renewcommand{\theHfigure}{A.\arabic{figure}}

\begin{center}
{\Large\bfseries Supplementary Material}\\[0.35em]
{\large Appendix}
\end{center}
\vspace{0.8em}

\noindent The appendices collect proofs of the claims in Section~\ref{sec:method}, the interaction loops used in every experiment, and further empirical detail. Algorithm~\ref{alg:unop} is the human's one-round mixer. Algorithms~\ref{alg:bimatrix} to \ref{alg:reclist} are the environments it is dropped into. They match the repository simulators on the objects that affect reported numbers: support construction, the {IR} intersection, {UCB} scores, and unique-best filler items.

\begin{algorithm}[H]
\caption{One-round \unop{} / \aunop{}}
\label{alg:unop}
{\small
\begin{algorithmic}[1]
\Require $u\in\R^{|A|}$, $\alpha\ge 0$, reservation $\underline{u}$, optional forecast $P$, aware flag, $\varepsilon>0$
\State $u^\star \leftarrow \max_a u(a)$ \quad $\cA \leftarrow \{a:u(a)\ge u^\star-\alpha\}$
\If{$\cA\cap\{a:u(a)\ge\underline{u}\}\neq\emptyset$}
    \State $\cA \leftarrow \cA\cap\{a:u(a)\ge\underline{u}\}$ \Comment{\eqref{eq:ir-set}}
\EndIf
\If{aware \textbf{and} $P$ is present}
    \State $\pi(a)\propto (P(a)+\varepsilon)^{-1}$ on $\cA$ \Comment{\eqref{eq:aware}}
\Else
    \State $\pi \leftarrow \mathrm{Unif}(\cA)$ \Comment{\eqref{eq:wc}. Unaware \unop{} ignores $P$}
\EndIf
\State \Return $a_t\sim\pi$
\end{algorithmic}
}
\end{algorithm}

\section{Earlier controlled comparisons}
\label{app:legacy}

\subsection{Comparison Results}
\label{sec:results-legacy}
\label{sec:exp-tri}
\label{sec:exp-price}
\label{sec:exp-rec}
Figure~\ref{fig:tri} plots, for each bimatrix game, predictive accuracy against human payoff normalized within the game. In zero-sum games the effective policies are the unpredictable ones: greedy and $\varepsilon$-greedy sit at the bottom, \aunop{} at the top. In common-interest games the mass concentrates at the top-right: being predicted is what good play looks like. In general-sum games the cloud is interior, and \unop{} occupies the attractive region, slightly above greedy in payoff, with lower accuracy. Absolute payoffs (Appendix~\ref{app:extra}) match the picture: zero-sum suite greedy $-0.143$ vs.\ \aunop{} $+0.054$. Coordination greedy $=$ \unop{} $=1.95$ vs.\ uniform $1.00$.

\begin{figure}
\centering
\includegraphics[width=0.8\linewidth]{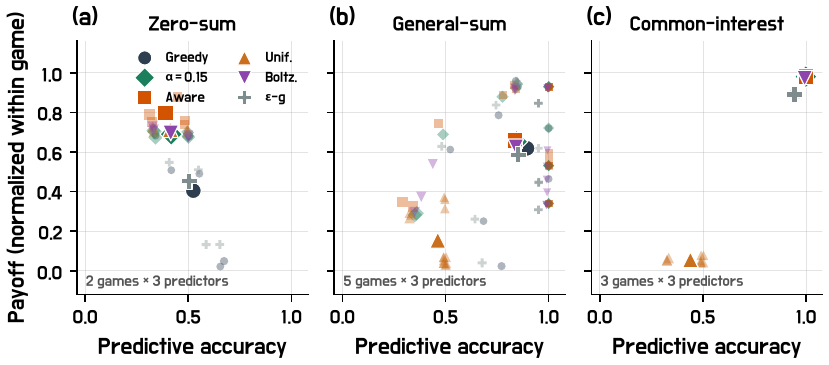}
\caption{(a)~Zero-sum: mixing is mandatory. (b)~General-sum: an interior frontier, and \unop{} sits on it. (c)~Common-interest: the Pareto point is pure. Faint markers are game and predictor pairs, with size proportional to policy entropy. Diamonds and circles are class-wise means $\pm$ sd.}
\label{fig:tri}
\end{figure}

Table~\ref{tab:main} and Figure~\ref{fig:price} summarize a {UCB} monopolist learning a hidden value $v\sim U[0.25,0.95]$. Greedy buyers are identified almost perfectly (accuracy $0.999$) and retain surplus $0.018$, on the order of the price grid. \unop{} with $\alpha=0.05$ more than doubles surplus to $0.042$ while remaining individually rational. $\alpha=0.15$ keeps $0.040$ at accuracy $0.807$ rather than $0.999$. \aunop{} matches this surplus at still lower accuracy ($0.785$). Boltzmann and uniform buy above value and finish negative. $\varepsilon$-greedy, which spends exploration on the dominated action, is worse than greedy. Table~\ref{tab:main-pred} reports the matching log-loss and optimizer-payoff numbers: under \unop{}-$\alpha=0.15$, pricing $V_O$ falls from $0.479$ (greedy) to $0.241$, and log loss rises from $0.026$ to $0.312$. Appendix Figure~\ref{fig:ts} shows the same runs as time series: greedy posted prices climb toward $v$. \unop{} keeps a lower posted price. The predictor's hit rate settles near the policy's Bayes accuracy rather than revealing a failure to learn the mixture (Section~\ref{sec:oracle}).

\begin{table}[H]
\caption{Main extraction results (mean over seeds). Surplus is buyer surplus in pricing and $q-\tau$ in recommendation. Synth.\ strat.\ utility is approval minus manipulation cost on the synthetic population (Adult/German are in Table~\ref{tab:public}). Green: best surplus in each extraction column, and the least-predictable effective recsys policy. Blue: extractor lock-in (greedy accuracy) or the weak-dilemma peak (greedy still wins on strategic utility).}
\label{tab:main}
\vspace{-0.4em}
{\small
\begin{center}
\begin{tabular}{lcccccc}
\hline
Policy & \multicolumn{2}{c}{Pricing} & \multicolumn{2}{c}{Recsys} & \multicolumn{2}{c}{Synth.\ strat.} \\
 & surplus & acc. & surplus & acc. & util. & acc. \\
\hline
Greedy & $0.018$ & \cellcolor{cellblue}$0.999$ & $0.052$ & \cellcolor{cellblue}$0.704$ & \cellcolor{cellblue}$0.517$ & $0.304$ \\
$\varepsilon$-greedy & $0.011$ & $0.954$ & $0.051$ & $0.489$ & $0.500$ & $0.285$ \\
Boltzmann $0.2$ & $-0.021$ & $0.665$ & $0.024$ & $0.169$ & $0.435$ & $0.237$ \\
Boltzmann $0.5$ & $-0.068$ & $0.595$ & $-0.043$ & $0.134$ & $0.402$ & $0.214$ \\
\unop{} $\alpha{=}0.05$ & \cellcolor{cellgreen}$\mathbf{0.042}$ & $0.962$ & $0.126$ & $0.315$ & $0.496$ & $0.253$ \\
\unop{} $\alpha{=}0.15$ & $0.040$ & $0.807$ & \cellcolor{cellgreen}$\mathbf{0.148}$ & $0.340$ & $0.458$ & $0.276$ \\
\unop{} $\alpha{=}0.3$ & $0.022$ & $0.735$ & $0.110$ & $0.435$ & $0.409$ & $0.220$ \\
\aunop{} $\alpha{=}0.15$ & $0.041$ & $0.785$ & $0.124$ & \cellcolor{cellgreen}$0.241$ & $0.442$ & $0.240$ \\
Uniform & $-0.133$ & $0.499$ & $-0.094$ & $0.107$ & $0.368$ & $0.199$ \\
\hline
\end{tabular}
\end{center}
}
\end{table}

\begin{table}[H]
\caption{Same runs as Table~\ref{tab:main}. Predictor log loss is higher when the policy is less readable, and optimizer payoff $V_O$ is lower when there is less extraction in pricing and recommendation. In the synthetic strategic setting, $V_O$ is institution accuracy.}
\label{tab:main-pred}
\vspace{-0.4em}
{\footnotesize
\begin{center}
\begin{tabular}{lcccccc}
\hline
Policy & \multicolumn{2}{c}{Pricing} & \multicolumn{2}{c}{Recsys} & \multicolumn{2}{c}{Synth.\ strat.} \\
 & nll & $V_O$ & nll & $V_O$ & nll & $V_O$ \\
\hline
Greedy & $0.026$ & $0.479$ & $0.847$ & $1.183$ & $1.596$ & $0.882$ \\
$\varepsilon$-greedy & $0.198$ & $0.446$ & $1.453$ & $1.129$ & $1.599$ & $0.880$ \\
Boltzmann $0.2$ & $0.593$ & $0.279$ & $2.639$ & $0.871$ & $1.589$ & $0.887$ \\
Boltzmann $0.5$ & $0.673$ & $0.346$ & $2.657$ & $0.951$ & $1.608$ & $0.888$ \\
\unop{} $\alpha{=}0.05$ & $0.082$ & $0.396$ & $1.694$ & $0.962$ & $1.589$ & $0.895$ \\
\unop{} $\alpha{=}0.15$ & $0.312$ & \cellcolor{cellgreen}$0.241$ & $2.057$ & \cellcolor{cellgreen}$0.623$ & $1.561$ & $0.902$ \\
\unop{} $\alpha{=}0.3$ & $0.408$ & $0.189$ & $1.873$ & $0.533$ & $1.587$ & $0.891$ \\
\aunop{} $\alpha{=}0.15$ & $0.309$ & $0.238$ & $2.080$ & $0.764$ & $1.598$ & $0.897$ \\
Uniform & $0.704$ & $0.462$ & $2.596$ & $1.009$ & $1.615$ & $0.887$ \\
\hline
\end{tabular}
\end{center}
}
\end{table}

\begin{figure}[t]
\centering
\vspace{-0.6em}
\includegraphics[width=0.8\linewidth]{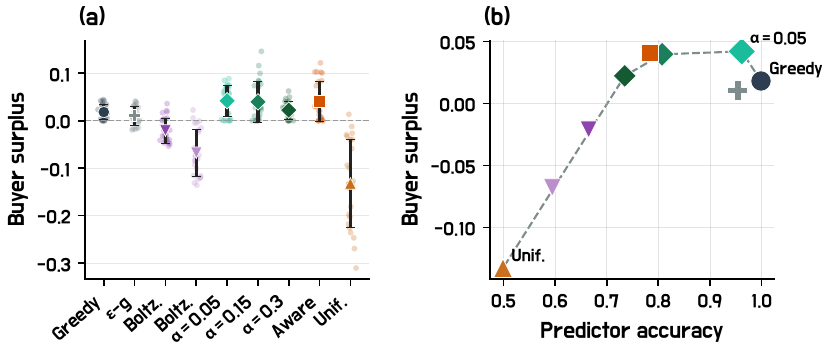}
\caption{(a)~Buyer surplus (seed strip + mean). (b)~The same runs in the effectiveness-predictability plane (marker size $=$ buy rate). Dashed path greedy $\to$ UNOP-$\alpha$ $\to$ uniform. Per-value scatter is in Appendix Figure~\ref{fig:A-pricing-scatter}. Boltzmann/uniform go negative because they buy above value.}
\label{fig:price}
\end{figure}

We first study a unique-best withholding extractor, and then a policy-aware slate optimizer that best-responds to each user policy (Section~\ref{sec:oracle}). In the unique-best design, once acceptance can be predicted, the platform withholds better organic content so that a high-margin item is the unique quality-best offer. Figure~\ref{fig:rec}(a) and Table~\ref{tab:main} show that greedy users are exactly the users this attack is designed for: surplus $0.052$, promoted-item rate $0.858$, accuracy $0.704$. \unop{} with $\alpha=0.15$ raises surplus to $0.148$ ($\approx 2.8\times$), cuts promoted exposure to $0.330$, and drops accuracy to $0.340$. \aunop{} is the least predictable effective policy (accuracy $0.241$). Uniform again goes negative. Removing the individual-rationality floor changes the sign of pricing surplus (Appendix Table~\ref{tab:A-ir}): \unop{}-$\alpha=0.15$ becomes negative once the policy may buy above value, because those purchases teach the monopolist that the current price is still acceptable.

\begin{figure}[t]
\centering
\vspace{-0.6em}
\includegraphics[width=0.8\linewidth]{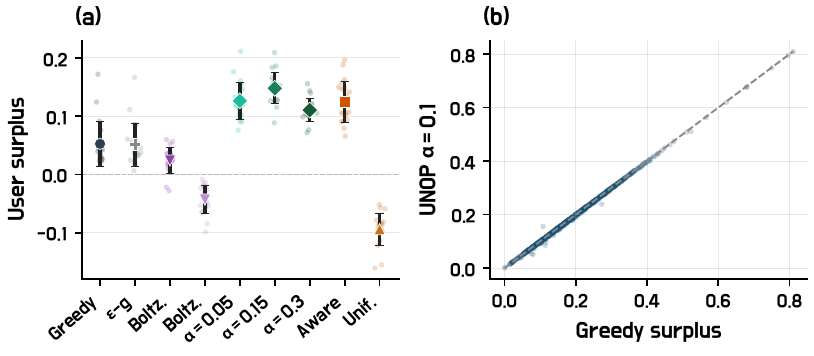}
\caption{(a)~Synthetic margin-seeking recommender: user surplus (seed strip + mean). (b)~Complete MovieLens-100K, per-user greedy vs.\ \unop{} scatter. Promoted-item rate versus surplus is in Appendix Figure~\ref{fig:A-recsys-scatter}.}
\label{fig:rec}
\end{figure}

\subsection{Discussion}
\label{sec:exp-more}
\label{sec:exp-weak}
The experiments do not say that unpredictability is always useful. They separate the cases in which a pure best response gives the optimizer a threshold, the cases in which there is no slack left to mix over, and the cases in which the counterpart is not trying to extract surplus at all. The practical question is where the $\alpha$-optimal set is large enough to mix over, and whether the other side gains by sitting on that threshold.

On MovieLens-100K (Figure~\ref{fig:rec}(b), all $943$ users and $6601$ user-policy jobs), five-star ratings often leave a single top item in the $\alpha$-optimal set, so \unop{} nearly reduces to greedy. Surplus is $0.189$ for greedy, $0.187$ for \unop{} at $\alpha=0.1$, and $-0.115$ for uniform. MovieLens-1M ($6{,}040$ users, $42{,}280$ jobs) gives the same ordering ($0.187$, $0.185$, and $-0.130$). Theorem~\ref{thm:frontiers} predicts this boundary. Coarse utilities remove the room to mix. Pricing and finely graded quality keep it.

A frequency predictor reaches accuracy $1.00$ on greedy behavior in $30$ steps. For \unop{} with $\alpha=0.15$, realized accuracy stays near $0.30$ through $350$ steps in $15$ of $16$ seeds (Appendix Figure~\ref{fig:lock}), which is the Bayes accuracy of a mixture on several near-optimal actions, not a failure to identify the mixture (Proposition~\ref{prop:decomp}). $\varepsilon$-greedy still concentrates, because $90\%$ of its mass stays on one action. Recommendation surplus peaks near $\alpha=0.10$ and then falls as skipping becomes excessive (Appendix Figure~\ref{fig:alpha}). With about $30\%$ \unop{} users, a shared margin-seeking optimizer's accuracy falls from $0.460$ on a purely greedy population to $0.219$, and greedy users' surplus rises from $0.143$ to about $0.17$ (Appendix Figure~\ref{fig:lock}, right). The optimizer no longer settles on one threshold, so part of the gain reaches users who do not change policy.

When the institution maximizes accuracy, lower predictability usually brings little or no utility gain. Across Adult and the other credit and marketing sets in Table~\ref{tab:public}, greedy stays ahead on utility. The counterpart's objective decides whether being readable is costly.

The same loop appears when an assistant learns which tasks, queues, or revisions a person will accept. That map helps delegation, and it can also steer later work toward the most readable person. Keeping several acceptable ways to carry out a task, when real slack exists, is the practical form of \unop{}. Proposition~\ref{prop:common} still applies when the counterpart is cooperative.

\section{Proofs}
\label{app:proofs}

The arguments below concern support, security level, coordination tax, demand curves, and Dirichlet concentration.

\begin{proof}[Proof of Proposition~\ref{prop:decomp}]
The hit rate of a predictor that names an action $a^\star$ is $\pi(a^\star)\le\max_a\pi(a)$, and the bound is achieved by naming a mode. For the log loss,
\[
\EE_{a\sim\pi}[-\log P(a)]
=-\sum_a\pi(a)\log\pi(a)+\sum_a\pi(a)\log\frac{\pi(a)}{P(a)}
=H(\pi)+\kl(\pi\|P).
\]
The KL of two distributions is zero if and only if they agree $\pi$-almost everywhere. Off the support the value of $P$ does not enter the expectation, which is why the statement is restricted to $\mathrm{supp}(\pi)$. Clipping at $10^{-12}$ is only a numerical guard in the experiments. The identity itself is stated for strictly positive probabilities on the support.
\end{proof}

\begin{proof}[Proof of Proposition~\ref{prop:zerosum}]
Write the human payoff matrix as $U$, with rows indexed by $a$ and columns by pure optimizer actions $o$. The security set is the nonempty polytope
\[
\mathcal S=\{\pi:\pi\ge0,\ \mathbf 1^\top\pi=1,\ U^\top\pi\ge v^\star\mathbf 1\}.
\]
If $\min_o u_H(\pi,o)\ge v^\star-\varepsilon$, then every violated security inequality has residual at most $\varepsilon$:
$\|[(v^\star\mathbf 1-U^\top\pi)_+]\|_\infty\le\varepsilon$.
Hoffman's error bound for a fixed feasible system of linear equalities and inequalities~\citep{hoffman1952approximate} gives a finite constant $C_G$ such that
$\inf_{q\in\mathcal S}\|\pi-q\|_1\le C_G\varepsilon$.
Since total variation is one half of the $\ell_1$ distance, the displayed claim follows with $\kappa_G=C_G/2$. This constant depends on the payoff matrix. The proposition does not assert a uniform constant over all games.

For a pure action $a$, an exact best responder realizes $\min_o u_H(a,o)$, which is strictly below $v^\star$ unless $a$ itself is a security strategy. In matching pennies with payoffs $\{+1,-1\}$, $\mathcal S=\{(1/2,1/2)\}$ and a policy $(1/2+d,1/2-d)$ has security value $-2|d|$ and total-variation distance $|d|$ from $\mathcal S$. Thus value within $\varepsilon$ requires $|d|\le\varepsilon/2$, ruling out a fixed peak as $\varepsilon\to0$. Rock, paper, scissors has the analogous unique uniform security strategy. This is the precise, game-dependent sense in which effective play in our zero-sum examples must be mixed.
\end{proof}

\begin{proof}[Proof of Proposition~\ref{prop:common}]
Let $u^\star=u(a^\star,o^\star)$ and $\bar u=\max_{a\ne a^\star,o}u(a,o)$. Uniqueness of the global maximizer implies $\bar u<u^\star$. The point mass on $a^\star$, paired with $o^\star$, attains $u^\star$. For any $\pi$ with $\pi(a^\star)=p$ and any fixed partner action $o$,
\[
\EE_{a\sim\pi}u(a,o)
\le p\,u^\star+(1-p)\bar u.
\]
The inequality therefore also holds after maximizing the left-hand side over the partner's actions. Relative to $u^\star$, every $p<1$ loses at least $(1-p)(u^\star-\bar u)$. Finally, grouping $a^\star$ against all other actions gives
$H(\pi)\le \mathrm h_2(p)+(1-p)\log(|A|-1)$. The right-hand side tends to zero as $p\to1$, so for each $h>0$ the constraint $H(\pi)\ge h$ implies $p\le p^\star(h)<1$.
\end{proof}

\begin{proof}[Proof of Theorem~\ref{thm:frontiers}]
(1) The feasible set of $(\mathrm{WC}_\alpha)$ is $\Delta(\cA_\alpha(u))$. Shannon entropy $H(\pi)=-\sum_a \pi(a)\log\pi(a)$ is strictly concave on this simplex and is uniquely maximised by the uniform distribution (Jensen). Thus $\mathrm{Unif}(\cA_\alpha)$ is the unique maximiser, $H=\log|\cA_\alpha|$, and $u(a)\ge u^\star-\alpha$ on that set implies $\EE_\pi[u]\ge u^\star-\alpha$. Moreover,
\[
\max_{a\in\mathrm{supp}(\pi)}(u^\star-u(a))\le\alpha
\quad\Longleftrightarrow\quad
\mathrm{supp}(\pi)\subseteq\cA_\alpha(u).
\]
Hence any positive mass outside $\cA_\alpha$ violates the advertised per-action guarantee. It need not violate an expected-loss constraint when that mass is small. $(\mathrm{WC}_\alpha)$ deliberately enforces the stronger support-wise condition.

(2) The Lagrangian for $\max_\pi \pi\cdot u + \tau H(\pi)$ over $\Delta(A)$ gives the Gibbs distribution~\citep{jaynes1957information,cover2006elements}: $\pi(a)\propto e^{u(a)/\tau}$. The map $\tau\mapsto H(\pi_\tau)$ is continuous and onto $(H(\mathrm{Unif}(\arg\max u)),\log|A|]$ for $\tau\in(0,\infty]$, so every $(\mathrm{SH}_h)$ with $h$ in that interval is attained at some $\tau>0$. Uniqueness follows from strict concavity of entropy. At $h=0$ the feasible set includes every mass on $\arg\max u$. The leak bounds~\eqref{eq:gibbs-leak} follow from $Z\le |A|e^{u^\star/\tau}$ and the existence of some $a\notin\cA_\alpha$ with $u(a)=u^\star-\alpha-\gamma$ for the lower bound, and from $Z\ge |\cA_\alpha|e^{(u^\star-\alpha)/\tau}$ together with $u(a)\le u^\star-\alpha-\gamma$ off $\cA_\alpha$ for the upper bound.
\end{proof}

\begin{figure}[t]
\centering
\includegraphics[width=0.82\linewidth]{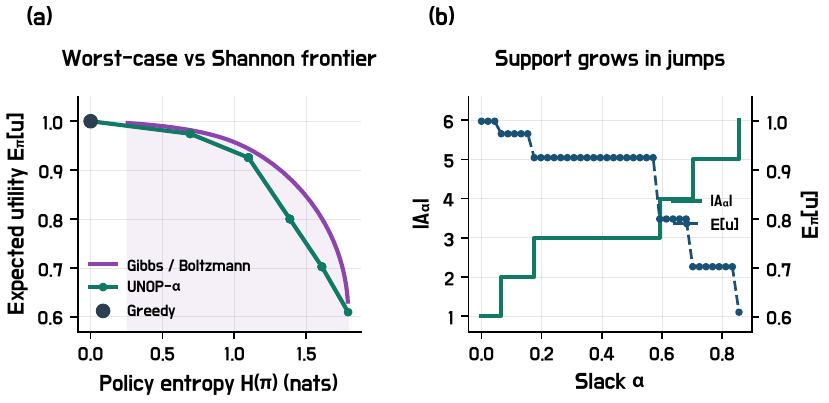}
\caption{(a)~Two Pareto curves on a six-action utility vector. Gibbs is optimal for a Shannon constraint and immediately puts weight outside the near-best set. \unop{} is optimal for a worst-case loss constraint. (b)~$|\cA_\alpha|$ jumps when the next action enters, while expected utility steps down.}
\label{fig:pareto}
\end{figure}

\begin{proof}[Proof of~\eqref{eq:aware}]
On $\Delta(\cA_\alpha^{\mathrm{IR}})$ the map $\pi\mapsto H(\pi)-\EE_\pi[\log(P+\varepsilon)]$ is strictly concave. The Lagrangian $\sum_a \pi(a)\log\frac{1}{\pi(a)(P(a)+\varepsilon)}+\lambda\bigl(1-\sum_a\pi(a)\bigr)$ has stationary point $\log\frac{1}{\pi(a)(P(a)+\varepsilon)}-1=\lambda$, hence $\pi(a)\propto (P(a)+\varepsilon)^{-1}$. The same point is the unique $I$-projection of $\nu_P\propto (P+\varepsilon)^{-1}$ onto the face, which is~\eqref{eq:aware-kl}.
\end{proof}

\begin{proof}[Proof of Theorem~\ref{thm:lockin}]
The posterior mean is $P_t(a)=(n_a+1)/(|A|+t)$. The strong law gives $n_a/t\to\pi(a)$ almost surely, hence $P_t\to\pi$. On $\mathrm{supp}(\pi)$ the map $P\mapsto\kl(\pi\|P)$ is continuous for $P$ bounded away from zero, which holds for large $t$ on that support, so the KL vanishes and the maximum coordinate converges to $\max_a\pi(a)$. For a $\delta$-peak on $a^\star$ with $\delta>1/2$, Hoeffding gives $\PP(n_{a^\star}\le t/2)\le\exp\bigl(-2t(\delta-1/2)^2\bigr)$. On that event $a^\star$ is the unique count maximizer. For uniform-on-$k$, $\EE[\max_a n_a]=t/k+\Theta\bigl(\sqrt{(t\log k)/k}\bigr)$ for fixed $k$, so
\[
\EE\bigl[\max_a P_t(a)\bigr]
=\frac{t}{k(|A|+t)}+\Theta\!\left(\frac{\sqrt{(t\log k)/k}}{|A|+t}\right),
\]
which tends to $1/k$.
\end{proof}

\begin{proof}[Proof of Proposition~\ref{prop:price}]
For $p<v-\alpha$, buy is the unique $\alpha$-optimal individually rational action, so $q=1$. For $p\in[v-\alpha,v]$, both skip and buy lie in the $\alpha$-set, and the constant-band rule sets $q=\rho$. For $p>v$, buy is individually irrational, so $q=0$. Sure-demand revenue is therefore maximized uniquely at $p^-_\alpha(v)$, with value $p^-_\alpha(v)$, whenever that price exists. Every band price earns at most $\rho\,p^+_\alpha(v)$. If $p^-_\alpha(v)>\rho\,p^+_\alpha(v)$, every band price, and every price above $v$, has strictly lower revenue, so the unique revenue maximizer is $p^-_\alpha(v)$. On a grid of spacing $\Delta$, $v-\alpha-\Delta<p^-_\alpha(v)<v-\alpha$, hence buyer surplus $v-p^-_\alpha(v)$ lies in $(\alpha,\alpha+\Delta]$. Any two rates that keep the seller below the band induce this same price and the same surplus. Let $r^\star=p^-_\alpha(v)/p^+_\alpha(v)$. Binary entropy on $(0,1)$ increases toward $1/2$. If $r^\star>1/2$, then $\rho=1/2$ lies in that set and uniquely maximizes entropy there. If $r^\star\le 1/2$, every such rate is strictly below $1/2$, entropy is increasing on $(0,r^\star)$, and the upper value $r^\star$ is not reached. A deployment that wants a maximizer can impose a revenue margin $\delta>0$, replacing~\eqref{eq:deter} by $p^-_\alpha(v)\ge \rho\,p^+_\alpha(v)+\delta$. The resulting set is closed and the entropy maximizer is $\min\bigl(1/2,(p^-_\alpha(v)-\delta)/p^+_\alpha(v)\bigr)$ whenever that quantity is positive.

If $v=2\alpha$ and $v\in\mathcal P$, then $p^+_\alpha(v)=v$ and band revenue at $\rho=1/2$ is $v/2=\alpha$, while every sure-demand price is strictly below $v-\alpha=\alpha$. Condition~\eqref{eq:deter} fails. The seller can prefer the band, including $p=v$, where buyer surplus is zero. Thus $\alpha\le v/2$ does not imply that the seller stays below the band under seller-favorable tie-breaking.
\end{proof}

\begin{figure}[t]
\centering
\includegraphics[width=0.62\linewidth]{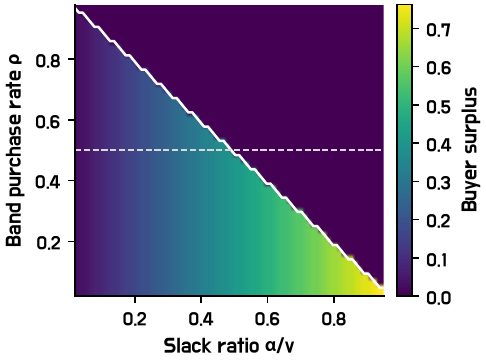}
\caption{Buyer surplus for a constant-band response on an $81$-point grid at $v=0.8$. The white curve is the open boundary $\rho=r^\star$. Seller-favorable tie-breaking does not treat the boundary itself as keeping the seller out. The dashed line is $\rho=1/2$. To the right of their intersection, uniform mixing fails~\eqref{eq:deter}, and the highest entropy of the rates that keep the seller below the band is not reached.}
\label{fig:phase}
\end{figure}

\begin{figure}[t]
\centering
\includegraphics[width=0.55\linewidth]{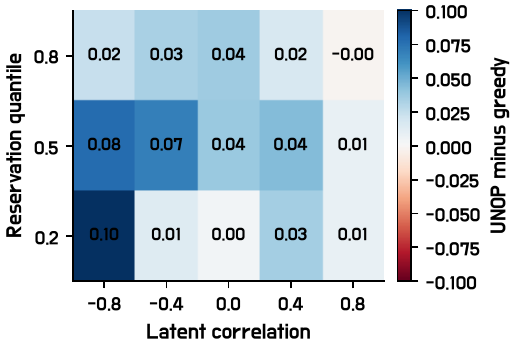}
\caption{Policy-aware beam search of width $50$ on catalogs drawn with a chosen Gaussian correlation, $n=48$, slate size $8$, $\alpha=0.15$, $60$ catalogs per cell. On $450$ matching $12$-item catalogs ($30$ per cell, both policies, $900$ policy and catalog rows) the same width matches exact enumeration in every case. Entries are mean user-surplus differences, \unop{} minus greedy. The horizontal axis is the correlation used to draw the pair, which is not the sample Pearson correlation of the realized quality and margin. The gain is largest at strong negative correlation. At strong positive correlation the paired interval covers zero. Paired intervals, medians, and positive fractions are in \texttt{supplement/sprint/results/recsys\_phase\_ci.json}. Quality ranking is negative in every cell.}
\label{fig:conflict}
\end{figure}

\section{Simulator algorithms}
\label{app:algs}

Algorithm~\ref{alg:unop} is the human's one-round map. Everything that surrounds it in the experiments is an adaptive optimizer that (i)~posts a menu, (ii)~observes $a_t$, (iii)~updates a predictor or a bandit count, and (iv)~posts a new menu. The three loops below are those optimizers. They match \texttt{experiments/simulate.py}. They specify the individual-rationality floor, the $5\%$ random slip, and the unique-best filler-item construction.

Two implementation details that Algorithm~\ref{alg:unop} compresses. First, under exact arithmetic the {IR} intersection is nonempty whenever the reservation set is nonempty, so~\eqref{eq:ir-set} reduces to intersecting $\cA_\alpha$ with $\{a:u(a)\ge\underline{u}\}$. If floating-point noise empties that set, the code falls back to $\cA_\alpha$. Second, \aunop{} uses $\varepsilon=10^{-6}$ in the $(P+\varepsilon)^{-1}$ weights. Both choices are implementation details. They are what make the released rows match the algorithms below.

What is being predicted.
In bimatrix games the predictor $P_t$ is a distribution on the human's action set, and the optimizer chooses an exact best response to that prediction. In posted pricing there is no recommendation list: the menu is the posted price, and $P_t$ is a Bernoulli on $\{\mathrm{skip},\mathrm{buy}\}$ at that price, maintained as a Beta posterior. In recommendation $P_t$ is an empirical distribution on the current recommendation list plus skip. Lock-in experiments use the frequency predictor of Algorithm~\ref{alg:bimatrix} on a synthetic six-action menu and declare $t_{\mathrm{lock}}$ at the first time a length-$30$ rolling accuracy exceeds $0.7$.

\begin{algorithm}[htbp]
\caption{Bimatrix closed loop (frequency / Markov / softmax predictor)}
\label{alg:bimatrix}
{\small
\begin{algorithmic}[1]
\Require game $(U_H,U_O)$, human policy $\pi$, horizon $T$, predictor $P$
\State $\hat\sigma \leftarrow \mathrm{Unif}(O)$ \Comment{empirical mix of optimizer actions}
\For{$t=1,\ldots,T$}
    \State $\hat u \leftarrow U_H\hat\sigma$ \Comment{mean-based human,~\eqref{eq:mean-based}}
    \State $a_t \sim \pi(\cdot\mid \hat u, P_t)$ \Comment{Algorithm~\ref{alg:unop}, or a baseline}
    \State $o_t \leftarrow \arg\max_o (U_O^\top P_t)_o$, with a $5\%$ uniform random slip
    \State observe payoffs $U_H(a_t,o_t)$, $U_O(a_t,o_t)$
    \State update $P_{t+1}$ from $a_t$. \quad $\hat\sigma \leftarrow$ empirical mix of $o_{1:t}$
\EndFor
\State \Return post-burn-in means of $u_H$, $u_O$, $\mathbf{1}\{\arg\max P_t = a_t\}$, and $H(\pi_t)$
\end{algorithmic}
}
\end{algorithm}

The $5\%$ random slip is there so that a pure profile cannot freeze the dynamics by accident. It is also why greedy in matching pennies finishes near $-0.14$ rather than at $-1$: identification is fast, not instantaneous, and the opponent occasionally plays off the punishing action. Reported bimatrix means drop the first quarter of the horizon.

\begin{algorithm}[htbp]
\caption{{UCB} posted-price monopolist}
\label{alg:pricing}
{\small
\begin{algorithmic}[1]
\Require price grid $\{p_j\}_{j=1}^{m}$, hidden value $v$, buyer policy $\pi$, horizon $T$
\State $s_j \leftarrow 1$, $f_j \leftarrow 1$ for all $j$ \Comment{Beta-Bernoulli prior on $\PP(\mathrm{buy}\mid p_j)$}
\For{$t=1,\ldots,T$}
    \State $\hat b_j \leftarrow s_j/(s_j+f_j)$
    \State $\mathrm{UCB}_j \leftarrow p_j\cdot\mathrm{clip}\bigl(\hat b_j+\sqrt{2\log(t+2)/(s_j+f_j)},\,0,1\bigr)$
    \State post $p_t \leftarrow p_{j^\star}$ with $j^\star=\arg\max_j \mathrm{UCB}_j$
    \State $\hat b_{j^\star} \leftarrow s_{j^\star}/(s_{j^\star}+f_{j^\star})$. \quad $P_t \leftarrow (1-\hat b_{j^\star},\,\hat b_{j^\star})$
    \State buyer utilities $u\leftarrow(0,\,v-p_t)$. \quad $a_t\sim\pi(\cdot\mid u, P_t)$ with reservation $0$
    \If{$a_t=\mathrm{buy}$}
        \State $s_{j^\star}\leftarrow s_{j^\star}+1$
    \Else
        \State $f_{j^\star}\leftarrow f_{j^\star}+1$
    \EndIf
\EndFor
\State \Return post-burn-in buyer surplus, posted price, buy rate, predictor accuracy
\end{algorithmic}
}
\end{algorithm}

Algorithm~\ref{alg:pricing} instantiates Proposition~\ref{prop:price}. A greedy buyer produces a step-function demand curve, so the {UCB} scores concentrate on the last grid cell below $v$ and surplus is about one grid step. An {IR}-\unop{} buyer produces the three-piece demand~\eqref{eq:unop-demand}, so the same scores cannot treat the coin-flip band as sure demand. Thompson sampling replaces the {UCB} bonus by a Beta draw (Appendix~\ref{app:optvar}). The qualitative gap survives. Turning the reservation off is Appendix~\ref{app:ir}: the buyer then sometimes buys above value, the success count $s_j$ is incremented at $p>v$, and surplus changes sign.

\begin{algorithm}[htbp]
\caption{Margin-seeking unique-best recommendation list}
\label{alg:reclist}
{\small
\begin{algorithmic}[1]
\Require catalog $(q_i,m_i)_{i=1}^{n}$, list size $k$, reservation $\tau$, user policy $\pi$, horizon $T$
\State $N_i\leftarrow 1$, $C_i\leftarrow 1$ for all $i$ \Comment{counts when $i$ was the unique-best offer}
\For{$t=1,\ldots,T$}
    \State $\hat p_i \leftarrow C_i/N_i$. \quad $\mathrm{UCB}_i \leftarrow \sqrt{2\log(t+2)/N_i}$
    \State $\mathrm{profit}_i \leftarrow m_i\cdot\mathrm{clip}(\hat p_i + 0.7\cdot\mathrm{UCB}_i,\,0,1.2)$
    \State offer $i^\star \leftarrow \arg\max_i \mathrm{profit}_i$ (uniform among items for the first $30$ rounds, and with an $8\%$ random slip thereafter)
    \State fill the remaining $k-1$ slots with items of quality strictly below $q_{i^\star}$
    \State $P_t \leftarrow$ predictive distribution on the displayed actions (simulation: extractor forecast on $i^\star$, remaining mass on skip and filler items)
    \State user utilities $\leftarrow (q_{j_1},\ldots,q_{j_k},\tau)$. \quad $a_t\sim\pi(\cdot\mid u, P_t)$ with reservation $\tau$
    \State $N_{i^\star}\leftarrow N_{i^\star}+1$
    \If{the click is $i^\star$}
        \State $C_{i^\star}\leftarrow C_{i^\star}+1$
    \EndIf
    \State surplus is $q_{\mathrm{clicked}}-\tau$ on a click and $0$ on skip
\EndFor
\end{algorithmic}
}
\end{algorithm}

The unique-best construction is the extraction. A vanilla ranking {UCB} that also shows better organic items is a weaker adversary, because a greedy quality maximizer then clicks the organic item and never certifies the high-margin offer. MovieLens uses the same loop with $q_{ui}$ from the official rating files, scaled to $[0,1]$, and the same quantile reservation. When the five-star grid makes $\cA_\alpha$ a singleton at the top star, Algorithm~\ref{alg:unop} returns greedy, which is the boundary in Theorem~\ref{thm:frontiers}.

Baselines, for the same menus.
Greedy is Algorithm~\ref{alg:unop} at $\alpha=0$. $\varepsilon$-greedy puts mass $\varepsilon/|A|$ on every action, including those below reservation. Boltzmann is $\pi(a)\propto\exp(u(a)/\tau)$ on the whole of $A$. Uniform is $1/|A|$. Hedge is multiplicative weights on realized $\hat u_t$, with no support constraint. Under Algorithms~\ref{alg:pricing} and~\ref{alg:reclist} the extractor does not change: only the human's support does. Gibbs leak (Theorem~\ref{thm:frontiers}) is visible in Algorithm~\ref{alg:pricing} as soon as Boltzmann buys at $p>v$ and increments the success count $s_j$ there.

Lock-in.
On a six-action synthetic menu with a planted $\delta$-peak or a uniform $k$-set, a frequency predictor $P_t(a)=(n_a+1)/(|A|+t)$ is updated every round. $t_{\mathrm{lock}}$ is the first $t$ at which a trailing window of $30$ one-hot hits $\mathbf{1}\{\arg\max P_s=a_s\}$ has mean at least $0.7$. Theorem~\ref{thm:lockin} is the concentration statement behind that curve: a $\delta$-peak with $\delta>1/2$ produces a unique posterior mode in $O(\log(1/\eta)/(\delta-1/2)^2)$ samples, while a uniform mix on $k$ near-ties keeps the posterior an $O(1/k)$-peak. $\varepsilon$-greedy still locks quickly because $1-\varepsilon$ of its mass is a peak. Entropy of $\cA_\alpha$, not a generic exploration rate, is what delays identification.

\section{Additional experimental details}
\label{app:extra}

All simulators are {NumPy}/{SciPy}/{scikit-learn} {CPU} code. We used $96$ processes. Controlled suite ($3240$ bimatrix jobs, $216$ pricing, $180$ recsys, $144$ synthetic strategic, $144$ lock-in, $452$ $\alpha$-sweep rows, $132$ population runs, $4508$ recorded rows) completed with zero errors. Public source files: MovieLens-100K, all $943$ users $\times$ $7$ policies ($6601$ jobs, official $\ge 20$ ratings). MovieLens-1M, all $6{,}040$ users $\times$ $7$ ($42{,}280$ jobs). {UCI} Adult, source population $48{,}842$, $d=108$ after one-hot, $16$ seeds $\times$ $9$ policies ($T=3000$, pretrained logistic accuracy $0.852$). German Credit, source population $1{,}000$, $d=61$, $16$ seeds $\times$ $9$ ($T=2000$, pretrained accuracy $0.785$). FICO {HELOC}, source population $10{,}000$, $16$ seeds $\times$ $9$. {UCI} Credit Card, source population $30{,}000$, $16$ seeds $\times$ $9$. Give Me Some Credit, source population $150{,}000$, pretrained accuracy $0.934$, $16$ seeds $\times$ $9$. {UCI} Bank Marketing, source population $45{,}211$, $d=51$, pretrained accuracy $0.901$, $16$ seeds $\times$ $9$. Bimatrix: $T=1000$, $12$ seeds, predictors $\{\mathrm{freq},\mathrm{Markov},\mathrm{logreg}\}$. Pricing: $T=500$, $24$ seeds, $21$-price grid. Recsys: $T=300$, $20$ seeds, $48$ items, list size $8$. Synthetic strategic: $T=700$, $16$ seeds, $200$ agents, $8$ features, $5$ manipulation levels. Lock-in: $T=350$, $6$ actions, $16$ seeds. Public files are loaded without file-level subsampling. MovieLens covers every user, while tabular interactions use the stated horizons.

Table~\ref{tab:public} reports greedy vs.\ \unop{} on every official public file we ran. MovieLens columns use $\alpha=0.1$. Tabular strategic-classification rows use $\alpha=0.15$.

\begin{table}[htbp]
\centering
\caption{Public complete datasets. $N$ is users for MovieLens and records for the tabular sets. Utility is user surplus on MovieLens and agent payoff, approval minus gaming cost, on the strategic-classification sets. MovieLens uses $\alpha=0.1$ because the ratings have almost no slack. Tabular rows use $\alpha=0.15$, where greedy is still ahead on utility and \unop{} is less readable. The Adult $\alpha=0.05$ sensitivity is utility/accuracy $0.181$/$0.428$, reported in Appendix~\ref{app:adult}.}
\label{tab:public}
\begin{center}
\small
\setlength{\tabcolsep}{4.5pt}
\renewcommand{\arraystretch}{1.12}
\begin{tabular}{lccccc}
\hline
Dataset & $N$ & greedy util. & greedy acc. & UNOP util. & UNOP acc. \\
\hline
MovieLens-100K (complete) & 943 & 0.189 & 0.415 & 0.187 & 0.418 \\
MovieLens-1M (complete) & 6040 & 0.187 & 0.419 & 0.185 & 0.425 \\
UCI Adult (complete) & 48842 & \cellcolor{cellblue}0.178 & 0.482 & 0.136 & 0.295 \\
German Credit (complete) & 1000 & \cellcolor{cellblue}0.757 & 0.618 & 0.722 & 0.328 \\
FICO HELOC & 10000 & \cellcolor{cellblue}0.501 & 0.779 & 0.470 & 0.312 \\
UCI Credit Card & 30000 & \cellcolor{cellblue}0.082 & 0.981 & 0.046 & 0.334 \\
Give Me Some Credit & 150000 & \cellcolor{cellblue}0.005 & 0.938 & -0.031 & 0.333 \\
UCI Bank Marketing & 45211 & \cellcolor{cellblue}0.061 & 0.815 & 0.024 & 0.327 \\
\hline
\end{tabular}
\end{center}

\end{table}

\begin{figure}[t]
\centering
\includegraphics[width=0.88\linewidth]{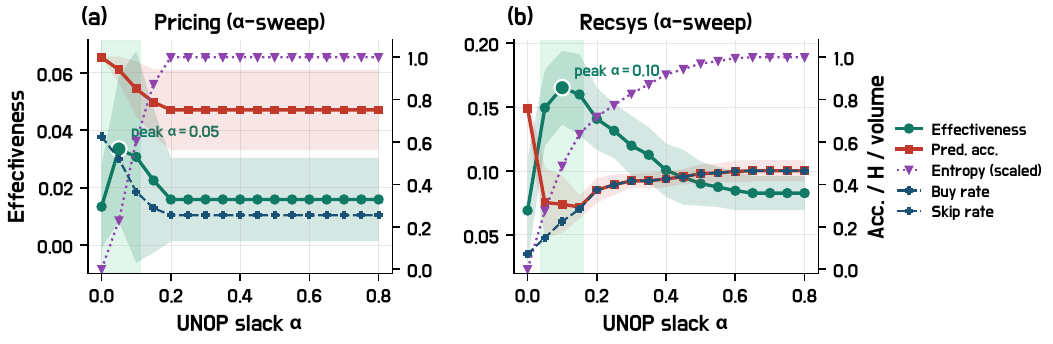}
\caption{(a)~Pricing $\alpha$-sweep. (b)~Recommendation $\alpha$-sweep. Shared legend at right: effectiveness rises and then falls. Buy/skip volume and scaled entropy show how the mix changes. Predictability falls once the $\alpha$-set is non-singleton.}
\label{fig:alpha}
\end{figure}

Matching pennies: greedy $-0.146$, \unop{}-$\alpha=0.15$ $0.000$, \aunop{} $+0.055$. {RPS}: greedy $-0.141$, \unop{} $0.00$, \aunop{} $+0.053$. Zero-sum suite means remain $-0.143$ / $0.000$ / $+0.054$. Coordination: greedy $=$ \unop{} $=1.95$, uniform $1.00$. Stag hunt: greedy $=$ \unop{} $=4.88$, uniform $3.96$. Chicken: \unop{}-$\alpha=0.15$ $1.82$ vs.\ greedy $1.77$. Inspection: \unop{}-$\alpha=0.05$ $0.96$ vs.\ greedy $0.92$. Prisoner's dilemma: all near-optimal policies collapse to defection against an adaptive counterpart ($1.11$), as expected.

In the margin-seeking recommender, greedy users click a promoted item in $85.8\%$ of rounds after the early exploration. \unop{}-$\alpha=0.15$ in $33.0\%$. \aunop{} in $41.5\%$. Skip rates rise with $\alpha$ ($0.064$ greedy, $0.340$ at $\alpha=0.15$), which is the volume and surplus tradeoff in Figure~\ref{fig:alpha}.

If the human's $\hat u$ is $u$ plus independent Gaussian noise of scale $\sigma$, the $\alpha$-set is a noisy near-best set. For $\sigma\le \alpha/2$, \unop{} remains inside the true $(2\alpha)$-optimal set with high probability on small action spaces, which is why a slightly larger $\alpha$ is a robust default.

Bimatrix results average frequency, Markov, and online-softmax predictors. The three-case pattern is stable across the three. Replacing margin-seeking unique-best recommendation lists with vanilla {CTR}-{UCB} ranking weakens extraction: without withholding, greedy quality-maximizers are not forced onto the margin.

\subsection{Restatement of the claims}
\label{app:theory-long}

This section restates the claims of Appendix~\ref{app:proofs} in the notation of the main text. The later ablations are read against that payoff shape.

Start with a utility vector $u\in\R^{|A|}$ that the human actually believes, or at least acts as if they believe. The number $u^\star=\max_a u(a)$ is the one-step best-response value, and the set $\cA_\alpha(u)=\{a:u(a)\ge u^\star-\alpha\}$ is the set of actions that are never more than $\alpha$ worse than that value. Theorem~\ref{thm:frontiers} separates two constraints. First, if every action that may be played lies in $\cA_\alpha$, then among all such policies the uniform distribution is the unique entropy maximizer, and every such policy has expected utility at least $u^\star-\alpha$ no matter how the rest of the vector is filled in. Second, if the only constraint is that entropy be at least some $h$, then the unique expected-utility maximizer is a Gibbs distribution, and Gibbs distributions put positive mass on every action as soon as the temperature is positive. The first constraint is a worst-case loss constraint and the second is a Shannon constraint. They coincide only when the utility vector is essentially flat on a subset and $-\infty$ elsewhere, which is not the situation a buyer or a recommender user is in.

That distinction is not a taste in regularizers. In posted pricing, a Gibbs buyer with even a modest temperature will sometimes buy above value, because the softmax never quite hits zero on the dominated action, and a learning monopolist treats those purchases as evidence that the posted price is still acceptable. \unop{} with an individual-rationality floor of zero will not do that. The ablation in Appendix~\ref{app:ir} is the empirical counterpart of this paragraph: when we turn the floor off, \unop{} starts to look like a slightly better-behaved Boltzmann policy, and the monopolist extracts it.

Proposition~\ref{prop:zerosum} is the matching-pennies version of the same payoff shape. If the opponent can identify a pure action and then choose the payoff-maximizing reply, then any policy that is a $\delta$-peak for large $\delta$ is giving away the security level. Mixing on the near-optimal set is effective play in this case, and \aunop{} is a small refinement that puts less weight where the current forecast is already high, instead of mixing blindly. Proposition~\ref{prop:common} is the opposite case. If the partner is trying to coordinate, a peak attains the Pareto-optimal payoff and mixing is a tax with no return from surplus-taking, so mixing is useful when there is room among near-best actions and the other side wants to take surplus.

Proposition~\ref{prop:price} is the claim that most of the empirical work is about. A greedy buyer publishes a step-function demand curve, so any revenue maximizer that knows that curve posts in the last grid cell at or below $v$ and buyer surplus is then on the order of one grid step. An individually rational \unop{} buyer publishes a band, with sure demand below $v-\alpha$, purchase rate $1/2$ on $[v-\alpha,v]$, and sure refusal above $v$. Let $p^-=\max\{p\in\mathcal P:p<v-\alpha\}$ and $p^+=\max\{p\in\mathcal P:v-\alpha\le p\le v\}$. The sure-demand price strictly keeps the seller below every price in that half-purchase band when $p^->p^+/2$. On a grid of spacing $\Delta$, $\alpha<v/2-\Delta$ is a convenient sufficient condition for this inequality, and the buyer surplus is then $\Theta(\alpha)$ up to one grid step. Otherwise the proposition claims the demand curve, and a lower band rate is needed before the same surplus bound holds. Theorem~\ref{thm:lockin} supplies the related concentration intuition, since a peak is identified in $O(\log(1/\eta))$ plays, whereas a uniform mixture on $k$ near-ties leaves only an $O(1/k)$ posterior peak.

None of this requires the human to solve a long-horizon Stackelberg problem against a mean-based learner~\citep{deng2019strategizing}, because the policies we study are one-step mixers. A fully patient optimizer may still extract them. The experiments instead isolate the behavior of the named {UCB} and recommendation-list extractors.

\subsection{Experimental protocol}
\label{app:protocol}

This section records implementation choices that the main experiments section only summarizes. Every number in the paper comes from the released result files. The only additional runs are the ablations in Appendices~\ref{app:ir} to \ref{app:optvar}.

What the human sees.
In bimatrix games the human does not see the optimizer's current mixed action as a one-hot. They see the empirical mix of optimizer actions so far, form $\hat u = U_H \hat\sigma$, and then apply $\pi(\cdot\mid \hat u, P)$. That is a mean-based, not a fully Bayesian, human, and it is closer to how a person looking at a platform's recent prices actually has to reason. In pricing the menu is a single posted price, so $\hat u=(\,0,\,v-p\,)$ is exact up to the noise ablation. In recommendation the menu is the current recommendation list plus the skip action, with skip's utility equal to the reservation $\tau$. We set $\tau$ to the $0.42$ quantile of item quality in the synthetic catalog so that the outside option lies inside the quality range. On MovieLens we keep the same construction after scaling ratings from $\{1,2,3,4,5\}$ into $[0,1]$.

What the optimizer sees.
Bimatrix predictors are frequency counts, a first-order Markov chain, or online softmax regression on the optimizer's own mix. The optimizer then chooses an exact best response to $P_t$, with a $5\%$ uniform random slip so that the dynamics cannot freeze on a pure profile by accident. The pricing optimizer is per-price Beta-Bernoulli {UCB} on $p\cdot\PP(\mathrm{buy}\mid p)$. The recommender is a unique-best margin-seeking recommendation list: one offer is the quality-best item on the recommendation list, the rest are strictly worse filler items, and {UCB} is run on the offer's acceptance. A vanilla ranking {UCB} that does not withhold better organic content is a weaker extractor, because it does not sit on $\tau$.

Burn-in.
Reported means drop the first quarter of the horizon in bimatrix, recommendation, and strategic settings, and the first third in pricing, because we are interested in what an adaptive system does after it has had a chance to look at the human, not in the first few exploratory prices. Lock-in curves are the exception: they are the object of interest, so we keep the whole trajectory and define $t_{\mathrm{lock}}$ as the first time a length-$30$ rolling accuracy reaches at least $0.7$.

Seeds and pooling.
Bimatrix: $12$ seeds $\times$ $3$ predictors $\times$ $9$ policies $\times$ $10$ games. Pricing: $24$ values of $v$. Recsys: $20$ catalog draws. Strategic: $16$ population draws. Public tables: MovieLens uses one job per user per policy on the official file. Adult, German, FICO {HELOC}, {UCI} Credit Card, Give Me Some Credit, and {UCI} Bank Marketing use $16$ seeds with a shared pretrained logistic classifier, then online updates on manipulated features. Ablations use $8$ to $24$ seeds as listed in each subsection. Tables report means over seeds. Figures show error bars or seed scatters where useful. Reported effects are either large relative to seed noise or visibly null: greedy surplus near one grid step, \unop{} surplus of order $\alpha$, and MovieLens collapse.

Compute.
All of the above is {NumPy}/{SciPy}/{scikit-learn} on {CPU}. We used $96$ {CPU} processes. The original controlled suite is $4508$ rows. Public complete files add $6601$ MovieLens-100K jobs, $42{,}280$ MovieLens-1M jobs, and $144$ runs ($16$ seeds $\times$ $9$ policies) each for Adult, German, {HELOC}, Credit Card, Give Me Some Credit, and Bank Marketing. All completed with zero failures.

\section{Bimatrix games}
\subsection{Bimatrix games, one at a time}
\label{app:bimatrix}

Figure~\ref{fig:A-bimatrix-heat} is the predictability side of the same data. Table~\ref{tab:A-bimatrix-games} reports mean human payoff in each of the ten games, averaged over the three predictors and twelve seeds. The ranking of policies is not a universal leaderboard. In matching pennies and rock, paper, scissors, greedy is a gift to the opponent and the mixed policies sit near or above zero. In coordination and stag hunt, greedy, small-$\alpha$ \unop{}, and \aunop{} are tied at the top, and uniform is strictly worse. In prisoner's dilemma every near-optimal policy collapses to defection, which is exactly what an adaptive counterpart does to a one-step cooperator.

\begin{table}[htbp]
\caption{Mean human payoff by bimatrix game, averaged over seeds and predictors. Mixing is mandatory in zero-sum, optional in general-sum, and costly in common-interest.}
\label{tab:A-bimatrix-games}
\centering
\begin{center}
\small
\setlength{\tabcolsep}{5.5pt}
\renewcommand{\arraystretch}{1.15}
\begin{tabular}{lccccc}
\toprule
Game & Greedy & $\varepsilon$-greedy & UNOP $\alpha{=}0.15$ & Aware-UNOP & Uniform \\
\midrule
Matching pennies & \cellcolor{cellblue}$-0.146$ & -0.125 & -0.000 & \cellcolor{cellgreen}$0.055$ & 0.011 \\
RPS & \cellcolor{cellblue}$-0.141$ & -0.113 & 0.000 & \cellcolor{cellgreen}$0.053$ & 0.012 \\
Prisoner's dilemma & 1.111 & 1.061 & 1.111 & 1.111 & 0.590 \\
Chicken & 1.765 & 1.616 & 1.818 & 1.871 & 1.382 \\
Inspection & 0.921 & 0.943 & 0.945 & 0.946 & 0.054 \\
Shapley & 0.339 & 0.345 & 0.391 & 0.412 & 0.333 \\
Coordination & \cellcolor{cellgreen}1.951 & 1.858 & \cellcolor{cellgreen}1.951 & \cellcolor{cellgreen}1.951 & 0.997 \\
Stag hunt & \cellcolor{cellgreen}4.878 & 4.789 & \cellcolor{cellgreen}4.878 & \cellcolor{cellgreen}4.878 & 3.959 \\
Battle of the sexes & 1.163 & 1.081 & 1.163 & 1.082 & 0.518 \\
Pure coordination & 0.965 & 0.902 & 0.964 & 0.964 & 0.329 \\
\bottomrule
\end{tabular}
\end{center}

\end{table}

\begin{figure}[t]
\centering
\includegraphics[width=0.98\linewidth]{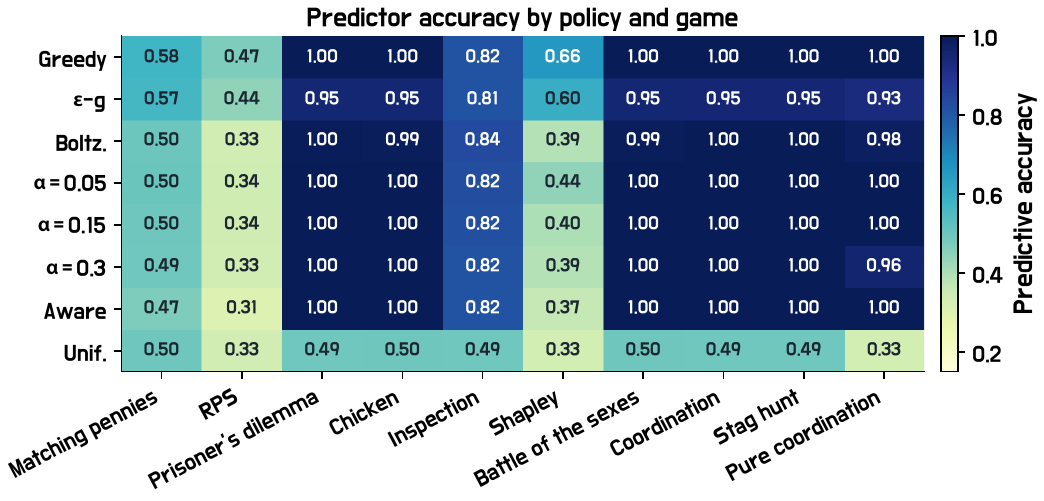}
\caption{Predictive accuracy by policy in the rows and game in the columns. Each cell is the numeric mean accuracy. Darker is more readable. Greedy is readable. Uniform is not. \unop{} lands in between only where the game actually has room at the chosen $\alpha$.}
\label{fig:A-bimatrix-heat}
\end{figure}

A reasonable worry is that the three-case pattern comes only from using a frequency predictor. Figure~\ref{fig:A-predictors} splits the same points by predictor. Frequency, Markov, and online softmax all produce the same qualitative cloud. Table~\ref{tab:A-bimatrix-T} reruns greedy, \unop{}-$\alpha=0.15$, and \aunop{} at $T\in\{200,500,1000,2000\}$ with a frequency predictor. In zero-sum games the mixed policies stay near the value as $T$ grows, while greedy does not recover. In common-interest games the three policies that can coordinate stay coordinated. The three-case pattern is not a short-horizon illusion.

\begin{figure}[t]
\centering
\includegraphics[width=0.98\linewidth]{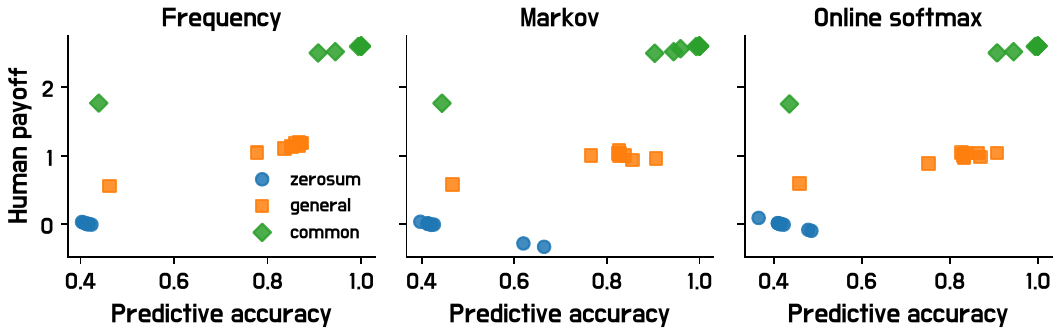}
\caption{Effectiveness against predictive accuracy, split by the optimizer's predictor. Circles are zero-sum, squares general-sum, diamonds common-interest. The three-case pattern is not an accident of counting frequencies.}
\label{fig:A-predictors}
\end{figure}

\begin{table}[htbp]
\caption{Human payoff versus horizon, grouped by game class under a frequency predictor. Zero-sum mixing remains necessary. Common-interest mixing remains a tax.}
\label{tab:A-bimatrix-T}
\centering
\begin{center}
\small
\setlength{\tabcolsep}{5.5pt}
\renewcommand{\arraystretch}{1.15}
\begin{tabular}{llcccc}
\toprule
Class & Policy & $T{=}200$ & $T{=}500$ & $T{=}1000$ & $T{=}2000$ \\
\midrule
Zero-sum & Greedy & -0.004 & -0.011 & -0.008 & -0.004 \\
 & UNOP $\alpha{=}0.15$ & 0.035 & -0.001 & -0.002 & 0.009 \\
 & Aware-UNOP & \cellcolor{cellgreen}0.021 & \cellcolor{cellgreen}0.017 & \cellcolor{cellgreen}0.033 & \cellcolor{cellgreen}0.013 \\
General-sum & Greedy & 1.206 & 1.196 & 1.197 & 1.201 \\
 & UNOP $\alpha{=}0.15$ & 1.232 & 1.224 & 1.226 & 1.225 \\
 & Aware-UNOP & 1.240 & 1.228 & 1.228 & 1.228 \\
Common-interest & Greedy & 2.625 & 2.607 & 2.596 & 2.595 \\
 & UNOP $\alpha{=}0.15$ & 2.626 & 2.607 & 2.596 & 2.595 \\
 & Aware-UNOP & 2.626 & 2.607 & 2.596 & 2.595 \\
\bottomrule
\end{tabular}
\end{center}

\end{table}

In matching pennies the human's actions are heads and tails, the optimizer wants to mismatch, and a greedy human who has a tiny preference for heads, or who simply broke a tie the same way twice, becomes a pure strategy that the optimizer can sit on. \unop{} with $\alpha$ large enough to include both actions is just the mixed security strategy, which is why its payoff sits at $0.000$. \aunop{} does a little better ($+0.054$) because it is not mixing independently of $P_t$: it puts extra mass on whichever action the predictor currently underweights, and is therefore anti-correlated with a slowly adapting opponent. The gain is empirical against that learner. It is not a guarantee against an exact best responder.

\section{Posted pricing}
\subsection{Posted pricing: where the surplus actually goes}
\label{app:price-long}

The main-text pricing figure reports surplus and the effectiveness-predictability plane. Figure~\ref{fig:A-pricing-scatter} shows the same runs as a scatter against the hidden value $v$. Greedy buyers lie on a tight band near zero surplus for every $v$: the monopolist found the threshold. \unop{} buyers keep a wedge of surplus. Uniform and Boltzmann buyers go negative: they are sometimes paying more than $v$. Figure~\ref{fig:ts} is the time series: greedy posted prices climb toward $v$. \unop{} stays below $v$. The hit rate should be compared with Bayes accuracy, not read as a failure to learn the mixture. Seed-level spreads appear as error bars in Figure~\ref{fig:ts}. Table~\ref{tab:main} reports means only, and the ordering is not an average of a bimodal mess.

\begin{figure}[t]
\centering
\includegraphics[width=0.98\linewidth]{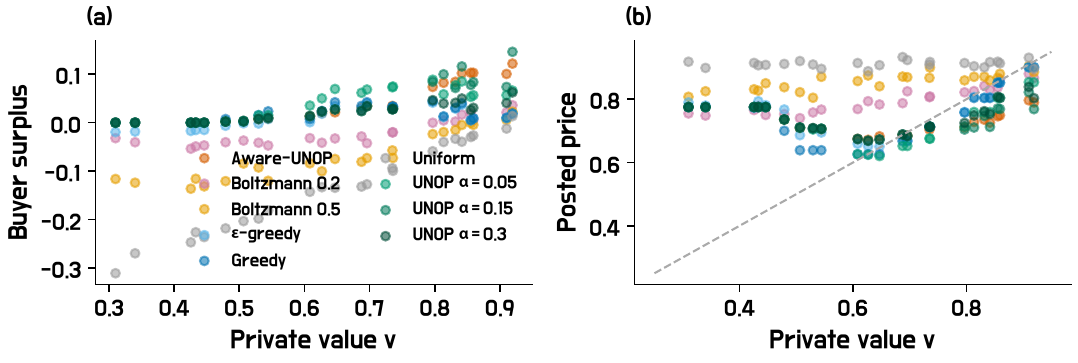}
\caption{Left: buyer surplus against private value. Right: posted price against $v$, with the diagonal $p=v$ as a dashed reference. Greedy transactions hug the diagonal from below. \unop{} leaves a visible gap. Boltzmann and uniform cross the diagonal.}
\label{fig:A-pricing-scatter}
\end{figure}

\begin{figure}[t]
\centering
\includegraphics[width=0.98\linewidth]{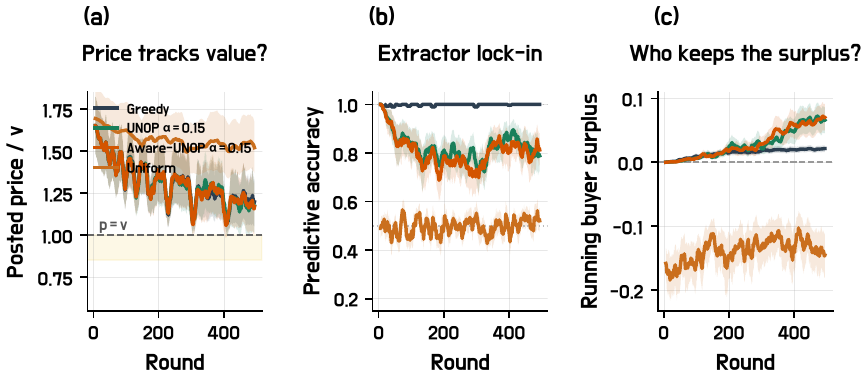}
\caption{Pricing dynamics (rolling mean $\pm$ s.e.). (a)~Posted price relative to $v$. (b)~Predictor accuracy. (c)~Running buyer surplus. Greedy prices climb toward $v$. \unop{} stays lower. Accuracy for \unop{} is bounded by the policy's Bayes accuracy.}
\label{fig:ts}
\end{figure}

\subsection{A worked numerical example}
\label{app:price-worked}

Suppose $v=0.70$ and the price grid is $\{0.00,0.05,\ldots,1.00\}$, so $\Delta=0.05$. A greedy buyer purchases at every $p\le 0.70$ and refuses otherwise. After enough plays, {UCB} on revenue will prefer $p=0.70$ to $p=0.65$ once both empirical buy rates are near one, because $0.70\cdot 1>0.65\cdot 1$, and it will prefer $p=0.70$ to $p=0.75$ once the latter's empirical buy rate is near zero. Buyer surplus is then $0$ when the posted price is exactly $0.70$, or $0.05$ if the optimizer sits one cell lower. In the main runs the mean leftover is $0.018$, which is a fraction of a cell, as Proposition~\ref{prop:price} said.

Now take \unop{} with $\alpha=0.15$ and reservation $0$. At $p=0.50$, buy gives $0.20$ and skip gives $0$, so $u^\star=0.20$ and $u^\star-\alpha=0.05$. Skip lies below that cut and buy is the unique $\alpha$-optimal {IR} action, hence $\PP(\mathrm{buy})=1$. At $p=0.60$, buy gives $0.10$ and skip gives $0$. Now $u^\star-\alpha=-0.05$, so both actions lie in $\cA_{0.15}$ and both are {IR}, and the human mixes equally. At $p=0.65$, buy gives $0.05$ and skip gives $0$. Both remain inside $\cA_{0.15}$ and both are {IR}, so the mix is again equal. At $p=0.75$, buy gives $-0.05$, which is excluded by the reservation even though a Boltzmann policy would still buy with positive probability. A seller who wants demand near $1$ is pushed down to $p\le 0.55$, and the buyer keeps about $0.15$.

\subsection{Grid resolution and horizon}
\label{app:grid}

Tables~\ref{tab:A-grid} and~\ref{tab:A-horizon} vary grid resolution and horizon. As the price grid goes from $11$ to $41$ cells at fixed $T=500$, greedy surplus falls ($0.031$ at $11$ cells, $0.006$ at $41$), which is $\Theta(\Delta)$. \unop{}-$\alpha=0.15$ remains above greedy at every resolution. Boltzmann and uniform stay negative. As $T$ grows on the default $21$-cell grid, greedy surplus is already small at $T=150$ and does not grow. \unop{} surplus rises ($0.007$ at $150$, $0.061$ at $800$), because post-burn-in means for \unop{} improve once early {UCB} exploration is behind it. \unop{} remains above greedy at every horizon in the sweep.

The two sweeps should be read together. A finer grid is a larger action space for the monopolist: each additional cell needs samples before its revenue estimate is usable, so at fixed horizon a finer spacing leaves more leftover exploration and compresses surplus after the early rounds for every policy. Extending $T$ gives the same optimizer time to finish that exploration, which is why surplus recovers as $T$ grows even though the spacing is held fixed. In short, finite-horizon surplus tracks how far the extractor has gotten through its action set, not only how wide the human's $\alpha$-band is. The practical boundary is that \unop{} still beats greedy on every grid we tried, but a very fine menu paired with a short horizon will look closer to the greedy grid step than the large-sample $\Theta(\alpha)$ band.

\begin{table}[htbp]
\caption{Buyer surplus versus the number of cells on the price grid.}
\label{tab:A-grid}
\centering
\begin{center}
\small
\setlength{\tabcolsep}{5.5pt}
\renewcommand{\arraystretch}{1.15}
\begin{tabular}{lcccc}
\toprule
Policy & $11$ cells & $21$ cells & $31$ cells & $41$ cells \\
\midrule
Greedy & 0.031 & 0.013 & 0.010 & 0.006 \\
UNOP $\alpha{=}0.15$ & \cellcolor{cellgreen}0.065 & \cellcolor{cellgreen}0.036 & \cellcolor{cellgreen}0.021 & \cellcolor{cellgreen}0.012 \\
Boltzmann 0.2 & -0.010 & -0.022 & -0.028 & -0.030 \\
Uniform & -0.140 & -0.135 & -0.143 & -0.142 \\
\bottomrule
\end{tabular}
\end{center}

\end{table}

\begin{table}[htbp]
\caption{Buyer surplus versus horizon $T$. Greedy tracks the grid step. \unop{} remains above it.}
\label{tab:A-horizon}
\centering
\begin{center}
\small
\setlength{\tabcolsep}{5.5pt}
\renewcommand{\arraystretch}{1.15}
\begin{tabular}{lcccc}
\toprule
Policy & $T{=}150$ & $T{=}300$ & $T{=}500$ & $T{=}800$ \\
\midrule
Greedy & 0.006 & 0.011 & 0.014 & 0.016 \\
UNOP $\alpha{=}0.15$ & 0.007 & 0.020 & \cellcolor{cellgreen}0.042 & \cellcolor{cellgreen}0.061 \\
Boltzmann 0.2 & -0.039 & -0.030 & -0.020 & -0.011 \\
Uniform & -0.150 & -0.139 & -0.135 & -0.126 \\
\bottomrule
\end{tabular}
\end{center}

\end{table}

\subsection{The individual-rationality ablation}
\label{app:ir}

\unop{} as defined in Section~\ref{sec:method} intersects $\cA_\alpha$ with $\{a:u(a)\ge \underline{u}\}$ whenever that intersection is nonempty. In pricing $\underline{u}=0$ (skip). In recommendation $\underline{u}=\tau$. Table~\ref{tab:A-ir} turns that intersection off and reruns the same policies.

Without the floor, \unop{}-$\alpha=0.15$ in pricing goes from $+0.036$ to $-0.018$, and \aunop{} goes from $+0.039$ to $-0.017$. This is a separate smaller-seed ablation, while Table~\ref{tab:main} reports $0.040$ with the floor on. That sign change is the whole argument for having a floor. Mixing into buy-above-value raises one-step entropy and also hands revenue to a learning monopolist. Greedy, $\varepsilon$-greedy, Boltzmann, and uniform are unaffected by the switch, because they never used the floor. In recommendation the same switch costs about $0.04$ of surplus for \unop{}.

\begin{table}[htbp]
\caption{Effectiveness with and without the individual-rationality floor. The floor is unused for greedy and for policies that already mix everywhere. For \unop{} it separates preserved surplus from a teaching signal.}
\label{tab:A-ir}
\centering
\begin{center}
\small
\setlength{\tabcolsep}{5.5pt}
\renewcommand{\arraystretch}{1.15}
\begin{tabular}{lcccc}
\toprule
Policy & \multicolumn{2}{c}{Pricing surplus} & \multicolumn{2}{c}{Recsys surplus} \\
 & IR on & IR off & IR on & IR off \\
\midrule
Greedy & 0.012 & 0.012 & 0.050 & 0.050 \\
UNOP $\alpha{=}0.15$ & \cellcolor{cellgreen}0.036 & -0.018 & \cellcolor{cellgreen}0.136 & 0.094 \\
Aware-UNOP & \cellcolor{cellgreen}0.039 & -0.017 & 0.105 & 0.055 \\
Boltzmann 0.2 & -0.022 & -0.022 & 0.037 & 0.037 \\
Uniform & -0.135 & -0.135 & -0.079 & -0.079 \\
\bottomrule
\end{tabular}
\end{center}

\end{table}

The floor is a modeling claim about extractors, not a claim that people compute a reservation and then mix above it. If a policy mixes into actions it would never take against a fixed menu, those actions teach the extractor where the cutoff still is. The reservation floor is what blocks that lesson.

\subsection{IR-constrained baselines}
\label{app:ir-base}

A natural question is whether the gap between \unop{} and Boltzmann/$\varepsilon$-greedy is entirely the reservation floor. Table~\ref{tab:A-ir-base} applies the same floor to those baselines: $\varepsilon$-greedy and Boltzmann are renormalized on $\{a:u(a)\ge\underline{u}\}$ whenever that set is nonempty. In pricing the floor turns Boltzmann's negative surplus positive, but IR-Boltzmann $0.2$ still trails \unop{}-$\alpha=0.15$ ($0.028$ vs.\ $0.040$). IR-$\varepsilon$-greedy remains near greedy. In recommendation, IR-Boltzmann $0.2$ reaches $0.113$ surplus, still below \unop{} at $0.148$. The floor removes teaching-signal losses. Uniform mixing on the near-optimal set is what keeps the extractor off the threshold.

\begin{table}[htbp]
\caption{Unconstrained vs.\ IR-constrained baselines on the same pricing and recsys loops as the main text ($T=500$/$300$).}
\label{tab:A-ir-base}
\centering
\begin{tabular}{lcccc}
\hline
Policy & \multicolumn{2}{c}{Pricing} & \multicolumn{2}{c}{Recsys} \\
 & surplus & acc. & surplus & acc. \\
\hline
Greedy & $0.018$ & $0.999$ & $0.052$ & $0.704$ \\
$\varepsilon$-greedy & $0.011$ & $0.954$ & $0.051$ & $0.489$ \\
IR-$\varepsilon$-greedy & $0.017$ & $0.970$ & $0.058$ & $0.564$ \\
Boltzmann $0.2$ & $-0.021$ & $0.665$ & $0.024$ & $0.169$ \\
IR-Boltzmann $0.2$ & $0.028$ & $0.766$ & $0.113$ & $0.389$ \\
Boltzmann $0.5$ & $-0.068$ & $0.595$ & $-0.043$ & $0.134$ \\
IR-Boltzmann $0.5$ & $0.025$ & $0.738$ & $0.090$ & $0.440$ \\
UNOP $\alpha{=}0.15$ & \cellcolor{cellgreen}$0.040$ & $0.807$ & \cellcolor{cellgreen}$0.148$ & $0.340$ \\
Uniform & $-0.133$ & $0.499$ & $-0.094$ & $0.107$ \\
\hline
\end{tabular}

\end{table}

\subsection{Misspecified utilities, and how much $\alpha$ has to absorb}
\label{app:noise}

The theory treats $u$ as known. In any honest deployment the human has $\hat u=u+\xi$. Table~\ref{tab:A-noise} adds independent Gaussian noise of scale $\sigma\in\{0,0.02,0.05,0.10,0.20\}$ to the utility vector before the policy is applied. In pricing, \unop{}-$\alpha=0.15$ surplus falls from $0.032$ at $\sigma=0$ to $-0.018$ at $\sigma=0.20$. The last of those is noise larger than $\alpha$, so the $\alpha$-set constructed from $\hat u$ is no longer a slightly wider true $\alpha$-set. Raising $\alpha$ further at $\sigma=0.20$ does not recover surplus in our runs: at the same noise, $\alpha\in\{0.25,0.30,0.40,0.50\}$ stays negative and slowly worsens, because a wider noisy band absorbs more wrongly signed actions through the reservation check. When $\sigma$ is small relative to $\alpha$, a default $\alpha$ on the order of twice the plausible utility noise is safer than a default that tries to be as small as the worst-case bound would suggest. When $\sigma$ exceeds that scale, the binding fix is a better $\hat u$, not a larger band.

\begin{table}[htbp]
\caption{Effectiveness as a function of independent Gaussian noise on the human's utility vector. \unop{} remains useful while $\sigma$ is small relative to $\alpha$. At $\sigma=0.20$ a larger $\alpha$ does not restore pricing surplus.}
\label{tab:A-noise}
\centering
\begin{center}
\small
\setlength{\tabcolsep}{5.5pt}
\renewcommand{\arraystretch}{1.15}
\begin{tabular}{llccccc}
\toprule
Env. & Policy & $\sigma{=}0.00$ & $\sigma{=}0.02$ & $\sigma{=}0.05$ & $\sigma{=}0.10$ & $\sigma{=}0.20$ \\
\midrule
Pricing & Greedy & 0.013 & 0.020 & 0.018 & 0.011 & -0.012 \\
 & UNOP $\alpha{=}0.15$ & \cellcolor{cellgreen}0.032 & 0.013 & 0.010 & 0.001 & -0.018 \\
 & Uniform & -0.135 & -0.136 & -0.136 & -0.136 & -0.136 \\
Recsys & Greedy & 0.044 & 0.090 & 0.109 & 0.106 & 0.065 \\
 & UNOP $\alpha{=}0.15$ & \cellcolor{cellgreen}0.129 & 0.106 & 0.101 & 0.094 & 0.054 \\
 & Uniform & -0.076 & -0.081 & -0.081 & -0.081 & -0.081 \\
\bottomrule
\end{tabular}
\end{center}

\end{table}

Recommendation is gentler, because a typical recommendation list has several items whose qualities are already close. Even there, $\sigma=0.20$ is too much for $\alpha=0.15$. We do not claim a sharp phase transition. We claim that a tiny $\alpha$ is not enough once noise exceeds slack, and that widening the band alone is not a recovery recipe.

\section{Recommendation and public data}
\subsection{Margin-seeking recommendation lists, catalog size, and promoted exposure}
\label{app:rec-long}

Figure~\ref{fig:A-recsys-scatter} plots promoted-item click rate against user surplus for every synthetic recsys seed. Greedy seeds cluster at high promo and low surplus, \unop{} seeds sit at lower promo and higher surplus, and uniform seeds occupy the high-promo, negative-surplus corner. Promoted exposure falls from $0.858$ to $0.330$ at $\alpha=0.15$ not because we penalized ads, but because the items that survive as unique-best offers are less often the promoted ones when $\tau$ is a band.

\begin{figure}[t]
\centering
\includegraphics[width=0.78\linewidth]{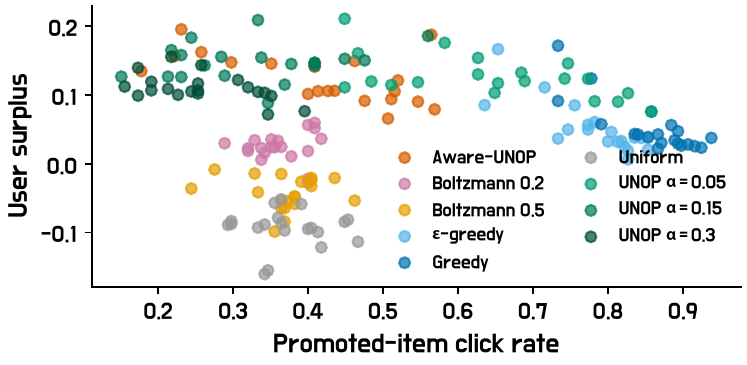}
\caption{Each point is one synthetic recommender seed. Horizontal: fraction of clicks on promoted items. Vertical: user surplus $q-\tau$.}
\label{fig:A-recsys-scatter}
\end{figure}

Table~\ref{tab:A-reclist} asks whether that cloud depends on having chosen $48$ items and a recommendation list of $8$. We crossed catalog sizes $\{24,48,72\}$ with recommendation-list sizes $\{4,8,12\}$ for greedy, \unop{}-$\alpha=0.15$, and uniform. Greedy stays poor across the grid. Uniform stays worse. \unop{} stays better, and it likes larger recommendation lists, because a larger recommendation list is more likely to contain a second item inside $\cA_\alpha$. A recommendation list of four with a unique-best offer and three filler items leaves little room: there may be nothing near-optimal to mix with except skip.

\begin{table}[htbp]
\caption{User surplus versus catalog size and recommendation-list size. The ranking does not depend on the particular $(48,8)$ pair used in the main text.}
\label{tab:A-reclist}
\centering
\begin{center}
\small
\setlength{\tabcolsep}{5.5pt}
\renewcommand{\arraystretch}{1.15}
\begin{tabular}{llccc}
\toprule
Policy & Slate & cat.\ $24$ & cat.\ $48$ & cat.\ $72$ \\
\midrule
Greedy & $4$ & 0.039 & 0.037 & 0.039 \\
 & $8$ & 0.055 & 0.055 & 0.064 \\
 & $12$ & 0.111 & 0.070 & 0.087 \\
UNOP $\alpha{=}0.15$ & $4$ & 0.125 & 0.092 & 0.052 \\
 & $8$ & 0.161 & \cellcolor{cellgreen}0.148 & 0.111 \\
 & $12$ & 0.302 & 0.198 & 0.152 \\
Uniform & $4$ & -0.091 & -0.131 & -0.154 \\
 & $8$ & -0.027 & -0.091 & -0.127 \\
 & $12$ & 0.024 & -0.049 & -0.093 \\
\bottomrule
\end{tabular}
\end{center}

\end{table}

\subsection{MovieLens, user by user, on the complete files}
\label{app:ml-long}

The main text reports that on MovieLens, \unop{} collapses to greedy because five-star ratings do not have slack. Figure~\ref{fig:A-ml100k} is the user-level answer on the official complete 100K file: the scatter hugs the diagonal, and the per-user difference is a spike at zero. Table~\ref{tab:A-ml-extra} records the same collapse on MovieLens-1M ($6{,}040$ users) and the promoted-item rates on both files. Greedy and \unop{} remain close. Uniform is much higher. This is the same collapse on the complete file, without subsampling.

\begin{figure}[t]
\centering
\includegraphics[width=0.92\linewidth]{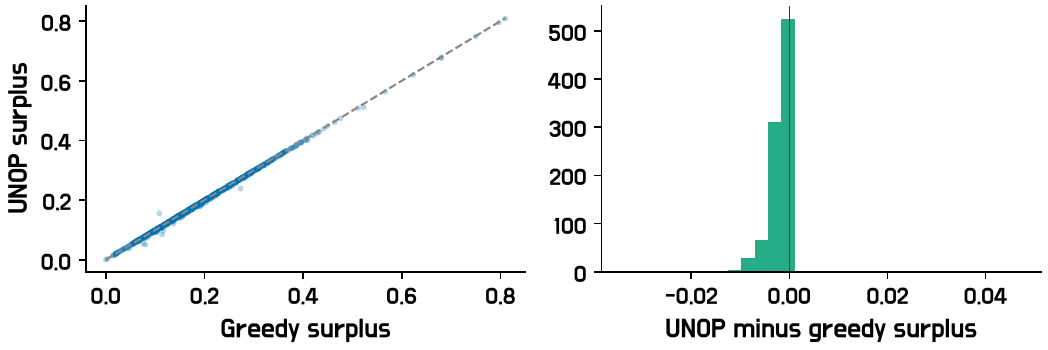}
\caption{Complete MovieLens-100K, all $943$ users. Left: per-user greedy versus \unop{}. Right: per-user difference. The collapse is not an average over a hidden subpopulation.}
\label{fig:A-ml100k}
\end{figure}

\begin{table}[htbp]
\caption{Complete MovieLens files: surplus collapse and promoted exposure. UNOP uses $\alpha=0.1$.}
\label{tab:A-ml-extra}
\centering
\begin{center}
\small
\setlength{\tabcolsep}{5.5pt}
\renewcommand{\arraystretch}{1.15}
\begin{tabular}{lccccc}
\toprule
File & $N$ & greedy surplus & UNOP surplus & greedy promo & UNOP promo \\
\midrule
MovieLens-100K & 943 & 0.189 & 0.187 & 0.085 & 0.096 \\
MovieLens-1M & 6040 & 0.187 & 0.185 & 0.065 & 0.084 \\
\bottomrule
\end{tabular}
\end{center}

\end{table}

Why is there no slack? Because a user who has rated twenty movies on a five-star grid often has a unique five-star title, and after scaling to $[0,1]$ that title is $1.0$ and the four-star titles are $0.75$, so any $\alpha<0.25$ leaves a singleton $\cA_\alpha$. If a platform displayed utilities on a finer scale, such as a continuous rating, a dwell-time proxy, or a willingness-to-watch slider, the same protocol would look like the synthetic recsys panel. The public ratings are coarse, and the reported result is the one those ratings support.

\subsection{UCI Adult and German Credit}
\label{app:adult}

Strategic classification on real tables is the other public-data pillar. The institution starts from a logistic classifier trained on unmanipulated official records (Adult accuracy $0.852$ on all $48{,}842$ rows after one-hot to $d=108$, and German Credit $0.785$ on all $1{,}000$ rows, $d=61$), then continues with online logistic updates on the features that agents actually present. Agents may game a small set of numeric coordinates that are at least arguably effort-like: age, education-num, capital gain/loss, and hours on Adult, and duration, credit amount, age, and existing credits on German. The action is one of five intensities along $\mathrm{sign}(w)$ on those coordinates, at quadratic cost. The predictor is a frequency model over those five intensities. This is not a claim that people in the 1994 Adult file actually gamed their hours-per-week against an adaptive classifier. It is a claim that if they did, with a linear institution that maximizes accuracy, mixing would reduce readability without raising utility.

Table~\ref{tab:public} already reports agent utility and predictor accuracy on the complete {UCI} Adult and German Credit tables. Greedy is best or tied for best on utility. On Adult, \unop{}-$\alpha=0.05$ gives utility/accuracy $0.181$/$0.428$, versus $0.178$/$0.482$ for greedy. Larger $\alpha$ costs utility without an extraction-flavored reward. Mixing does reduce readability. It just does not buy the agent anything the accuracy-seeking institution was threatening to take.

Figure~\ref{fig:A-cost-lockk} (left) varies the quadratic cost coefficient on the synthetic strategic population. As cost rises, everyone games less and policy differences shrink. No cost in this family makes \unop{} outperform greedy by a margin comparable to pricing. The counterpart maximizes accuracy rather than surplus.

\begin{figure}[t]
\centering
\includegraphics[width=0.98\linewidth]{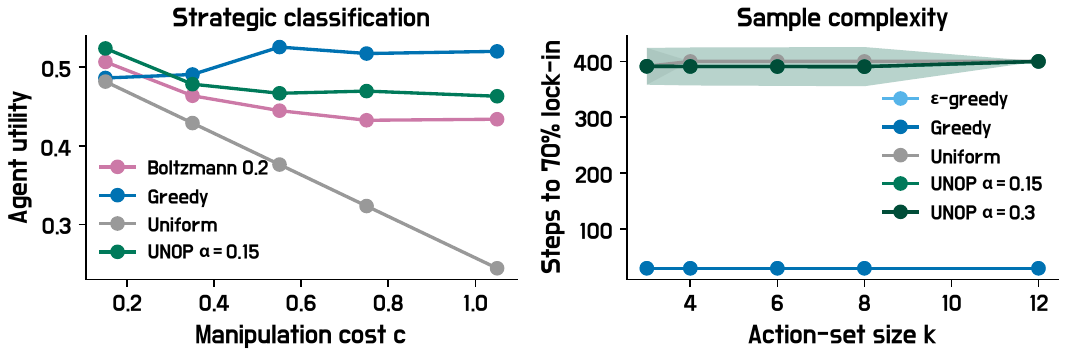}
\caption{Left: synthetic strategic classification as the manipulation-cost coefficient $c$ varies. Right: steps to $70\%$ lock-in as the action-set size $k$ varies. Greedy and $\varepsilon$-greedy lock in $30$ steps at every $k$ (the window length). \unop{} and uniform do not lock within the horizon.}
\label{fig:A-cost-lockk}
\end{figure}

\section{Lock-in and other extractors}
\subsection{Lock-in, populations, and the public-good reading}
\label{app:lock-long}

Theorem~\ref{thm:lockin} is a Dirichlet multinomial calculation, and Figure~\ref{fig:A-cost-lockk} (right) is the empirical version as $k=|A|$ grows. This separate $k$-sweep uses $T=400$. The main lock-in experiment uses $T=350$. Greedy and $\varepsilon$-greedy hit $70\%$ rolling accuracy at $t=30$ for every $k$ we tried. Uniform almost never hits it within the sweep horizon. \unop{} at $\alpha=0.15$ tracks uniform on that random utility draw once $\alpha$ admits several near-ties. The lock-in figure (Figure~\ref{fig:lock}) used a vector whose near-ties were farther apart, so small $\alpha$ still leaves a peak.

Figure~\ref{fig:lock} also shows the population overlay. A shared margin-seeking optimizer trained on a mix of greedy and \unop{} users has an uneven accuracy curve: highest on a pure-greedy population and lowest around a $30\%$ \unop{} minority. Greedy users' welfare rises when they are mixed with \unop{} neighbors.

\begin{figure}[t]
\centering
\includegraphics[width=0.98\linewidth]{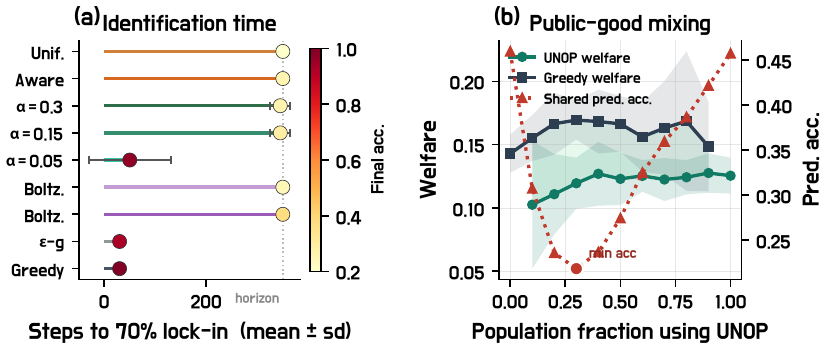}
\caption{Left: steps to $70\%$ lock-in (mean $\pm$ sd) with marker color $=$ final accuracy. Dotted line is the horizon. Right: welfare of greedy and \unop{} users in a shared-optimizer population (shaded gap) and the extractor's accuracy (right axis, min marked). Heterogeneity is itself a defense.}
\label{fig:lock}
\end{figure}

\subsection{Two tradeoffs, Gibbs leak, and why temperature is the wrong choice}
\label{app:frontiers-long}

Figure~\ref{fig:pareto} already contrasts the two tradeoffs on a six-action utility vector. The same qualitative gap (smooth Gibbs leak versus UNOP step support) appears on random vectors of dimension $4$ through $16$ in the dump \texttt{pareto.json}. Figure~\ref{fig:A-support-leak} makes that picture concrete. The left panel is the mean size of $\cA_\alpha$ as a function of $\alpha$. The right panel is the amount of Gibbs mass that sits outside $\cA_\alpha$ as a function of $\alpha$, for several temperatures. At $\tau=0.2$ and $\alpha=0.15$, more than a quarter of the Gibbs mass is already off the near-optimal set. Temperature is not a substitute for a hard support constraint. The two-panel $\alpha$-sweep in Figure~\ref{fig:alpha} is the corresponding empirical picture: effectiveness rises and then falls, and accuracy falls as soon as $\cA_\alpha$ is non-singleton.

\begin{figure}[t]
\centering
\includegraphics[width=0.98\linewidth]{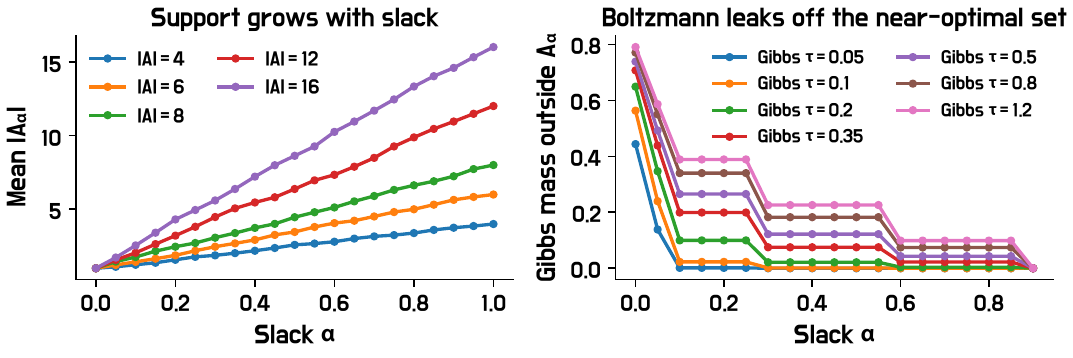}
\caption{Left: mean size of the $\alpha$-optimal set on random utility vectors. Right: Gibbs mass lying outside $\cA_\alpha$. Temperature is not a substitute for a hard support constraint.}
\label{fig:A-support-leak}
\end{figure}

\subsection{A different extractor: Thompson sampling}
\label{app:optvar}

{UCB} is a particular way to turn uncertain demand into a posted price. Table~\ref{tab:A-optimizer} reruns the pricing environment with a Beta-Bernoulli Thompson monopolist that posts $\arg\max_p p\cdot \theta_p$ for $\theta_p\sim\mathrm{Beta}(a_p,b_p)$, and includes Hedge as an additional human baseline. The qualitative ranking does not flip: uniform is still a disaster, greedy is still extracted, \unop{} still keeps surplus. The \unop{} surplus is larger against Thompson ($0.139$) than against {UCB} ($0.036$) in this parameterization, because Thompson is more willing to post inside a noisy band. Hedge is near greedy against {UCB} and slightly negative against Thompson, consistent with mean-based learning not being a defense against a posted-price extractor.

\begin{table}[htbp]
\caption{Buyer surplus against a {UCB} monopolist and a Thompson monopolist, same price grid and values of $v$. The human-side ranking is stable. Hedge is not a substitute for \unop{}.}
\label{tab:A-optimizer}
\centering
\begin{center}
\small
\setlength{\tabcolsep}{5.5pt}
\renewcommand{\arraystretch}{1.15}
\begin{tabular}{lcc}
\toprule
Policy & UCB surplus & Thompson surplus \\
\midrule
Greedy & 0.012 & 0.022 \\
UNOP $\alpha{=}0.15$ & \cellcolor{cellgreen}0.036 & \cellcolor{cellgreen}0.139 \\
Uniform & -0.135 & -0.132 \\
Hedge & 0.003 & -0.003 \\
\bottomrule
\end{tabular}
\end{center}

\end{table}

\subsection{Putting the environments together}
\label{app:tradeoff}

The environments do not share a utility scale, so a single scatter is unreadable. The main-text three-case figure, the pricing/recsys tables, and Table~\ref{tab:public} already locate each setting. If \unop{} were simply the random policy, it would sit with uniform. If it were simply greedy with a different name, it would sit with greedy. On pricing and recsys it sits up and left of greedy. On both MovieLens files it sits on top of greedy. On Adult and German it sits down and left: less readable, not more effective. Predictability is correlated with policy entropy across environments, but two policies with the same entropy can be very different for the human if one spends that entropy inside $\cA_\alpha$ and the other spends it on dominated actions. Accuracy, optimizer payoff, and human payoff are therefore the reported outcomes. Entropy alone does not determine them.

\subsection{A second worked example: a margin-seeking recommendation list}
\label{app:reclist-worked}

Imagine four items with qualities $(0.90, 0.72, 0.55, 0.40)$ and a reservation $\tau=0.60$. The platform can show three of them. A unique-best margin-seeking recommendation list that has identified $\tau$ will offer the item with quality $0.72$ as the unique best (it is just above $\tau$) and fill with $0.55$ and $0.40$ as filler items, because $0.90$ would give the user a large surplus and a lower margin. A greedy user clicks $0.72$ every time, surplus $0.12$, and the platform has sat on $\tau$. An \unop{} user with $\alpha=0.15$ sees utilities $(0.12, -0.05, -0.20, 0)$ for the three items and skip. The $\alpha$-optimal {IR} set is $\{0.72\text{-item}, \text{skip}\}$, because skip is $0$ and the offer is $0.12$, which differ by less than $0.15$, while the filler items are below reservation. The user therefore mixes equally between taking the offer and skipping. The platform now sees a $50\%$ acceptance rate on an item that would have been a sure thing if $\tau$ were $0.60$ with a greedy user, and {UCB} cannot treat $0.72$ as an identified threshold. If the catalog had also included an organic $0.85$ item on the recommendation list, that item would have entered $\cA_\alpha$ as well, and the mix would have included a high-quality click. Withholding is how the platform tries to prevent that. Mixing on skip is how the user refuses to certify the withheld menu.

Uniform, on the same recommendation list, clicks the $0.40$ item with positive probability, which is worse than skip, and the platform learns something even less helpful to the user: that junk is sometimes acceptable. Unpredictability without a floor is not a defense.

\section{Limitations and further checks}
\subsection{Limitations}
\label{app:not-claim}

The response-curve commitment effect applies when the platform optimizes against that curve, when feasible menus leave more than one acceptable action, and when the utility estimate places the set correctly. Noise large enough to move the estimated set removes the surplus, which is why the pricing gain and the recommendation gain should be read together with the noise and singleton checks in Section~\ref{sec:results}.

We list explicitly what this extra empirical work does not establish. We have not shown that \unop{} cannot be exploited by a patient optimizer who knows the policy class and chooses a best response to the mixture in closed form. We have shown that the adaptive {UCB} and margin-seeking recommendation-list extractors used in the experiments cannot sit on a band the way they sit on a threshold. We have not shown that users will adopt $\alpha$ as a slider. We have shown that if they did, the effect of changing $\alpha$ rises and then falls, rather than more randomness always being better. We have not shown that MovieLens would look like the synthetic recsys panel under a continuous rating. We have shown that under the ratings the file actually contains, there is no slack to mix. We have not shown that Adult applicants should randomize their hours. We have shown that against an accuracy-maximizing logistic institution, randomization buys a policy that is harder to read, and not more utility. We have not run a user study or queried a deployed API. Every optimizer and every human policy in this paper is a simulator.

Across these checks, mixing helps inside slack and with a reservation. It does not help when the counterpart is a collaborator or a pure accuracy maximizer. A five-star grid does not create a non-singleton $\alpha$-band.

\subsection{The ten bimatrix games}
\label{app:ten-games}

This subsection records each bimatrix game in words, together with the mean payoff over predictors and seeds. The payoffs are those in \texttt{src/envs.py}. The numbers are means from \texttt{bimatrix.csv}.

Matching pennies.
The human wants to match, the optimizer wants to mismatch, and there is no slack in the interesting sense because against a best responder the two actions are equally good at mixed equilibrium and equally terrible if played pure. A greedy human who breaks a tie toward heads twice in a row has published a peak. Frequency counts move, the optimizer plays tails, and the human's continuation value is $-1$ until they mix again. In the runs, greedy finishes at about $-0.15$ rather than $-1$ because of the $5\%$ random slip and because identification is not instantaneous. The sign of the payoff is the result that matters. \unop{} with $\alpha$ large enough to include both actions is the mixed security strategy and finishes at $0.000$. \aunop{} finishes at $+0.054$, which is the gain from avoiding a slow predictor. Uniform is statistically the same as \unop{} here, because the $\alpha$-set is the whole action set. In this game, remaining effective requires remaining unpredictable.

Rock, paper, scissors.
The same conclusion holds with three actions. Greedy that locks onto rock is giving the optimizer paper. \unop{} that mixes on all three when they are near-ties is the usual security mix. The numbers track matching pennies closely (greedy about $-0.14$, \unop{} $0.00$, \aunop{} $+0.05$).

Prisoner's dilemma.
Both players have a dominant strategy to defect if the other is held fixed, and an adaptive counterpart that can predict cooperation will defect against it. Every near-optimal policy in our suite therefore collapses to defection and gets about $1.11$, the mutual-defection payoff, well below the cooperative payoff of $3$. \unop{} is not a cooperation device. If the human wants to cooperate against an extractor, they need a different theory (commitment, repeated-game arguments, outside enforcement), not a mixer on $\cA_\alpha$.

Chicken.
Swerving versus staying has an interior mixed equilibrium and two asymmetric pure equilibria. A greedy human who looks like they will stay can induce the optimizer to swerve, which is good for the human, until the optimizer's predictor becomes confident enough to stay instead. \unop{}-$\alpha=0.15$ edges greedy ($1.82$ versus $1.77$) because mixing in the near-tie region makes it harder for the optimizer to pick the punishing pure strategy and sit on it. The gap is small. In this general-sum game a little slack is mildly useful, neither mandatory nor forbidden.

Inspection.
A worker who always works is wasting effort when the inspector is not inspecting. A worker who always shirks is caught once the inspector learns that. The $\alpha$-set often contains both work and shirk against the empirical inspect rate, which is the support on which \unop{} mixes. \unop{}-$\alpha=0.05$ gets $0.96$ against greedy's $0.92$. The gap is small and in the direction predicted by the theory.

Shapley's game.
A cyclic general-sum game that is a standard test for learning dynamics because no pure equilibrium is stable. Mixing is almost unavoidable, and the policies that already mix do better than greedy, which keeps trying to sit on a disappearing best response.

Battle of the sexes.
Two pure equilibria, disagreement about which one, and a mixed equilibrium that both dislike. A greedy human who looks determined can sometimes pull the optimizer onto their preferred equilibrium, which is a case where being predictable is helpful even though the game is not common-interest. \unop{} does not destroy that, because small $\alpha$ still leaves a peak on the currently better equilibrium, but large $\alpha$ mixes toward the mixed equilibrium and pays the coordination tax of Proposition~\ref{prop:common}.

Coordination and pure coordination.
Greedy, small-$\alpha$ \unop{}, and \aunop{} all sit at the Pareto-optimal pure profile (payoff $1.95$ on the $2,0$ and $0,1$ game, $0.965$ on three-action pure coordination, $4.88$ on stag hunt, and suite mean $2.598$). Uniform falls to $1.00$ on the two-action game. Predictor accuracy is $1.0$ for the peaked policies. If a deployment cannot tell whether the counterpart is a collaborator or an extractor, \unop{} run by default against a helper misses the meeting point.

Stag hunt.
The Pareto-optimal profile is (stag, stag) at $5$, the safe profile is (hare, hare) at $3$, and a greedy human who is predicted to hunt stag will be met there by an adaptive partner. Mixing toward hare fails to reach the stag equilibrium. Our runs put greedy and \unop{} together at $4.88$ and uniform at $3.96$, which is the coordination tax. \unop{} is not an appropriate default when the counterpart is a cooperative assistant completing a joint task, which is the content of Proposition~\ref{prop:common}.

\subsection{Hyperparameters}
\label{app:hyper}

Table~\ref{tab:A-hyper} lists $\alpha$ and the other experimental settings.

\begin{table}[htbp]
\centering
\caption{Experimental settings. Only $\alpha$ is a user-facing parameter. The rest are simulation choices checked by ablations.}
\label{tab:A-hyper}
\small
\setlength{\tabcolsep}{5.5pt}
\renewcommand{\arraystretch}{1.12}
\begin{tabular}{p{0.20\linewidth}p{0.28\linewidth}p{0.44\linewidth}}
\toprule
Setting & Value(s) used & Role \\
\midrule
Bimatrix $T$ & $1000$ (ablation: $200$ to $2000$) & Long enough for frequency counts to concentrate. Ablations check that the ranking does not flip. \\
Bimatrix seeds & $12$ & Enough to see seed spread. \\
Bimatrix random slip & $5\%$ uniform & Prevents exact freeze on a misidentified pure profile. \\
Pricing $T$ & $500$ (ablation: $150$ to $800$) & {UCB} has time to leave exploration. Longer $T$ helps \unop{} relative to greedy. \\
Price grid & $21$ points on $[0,1]$ (ablation: $11$ to $41$) & Mesh $\Delta\approx 0.05$. Finer grids extract greedy more, consistent with Proposition~\ref{prop:price}. \\
$v$ & $U[0.25,0.95]$, $24$ draws & Avoids trivial $v$ near $0$ or $1$. \\
Recsys $T$ & $300$ & Catalog identification is cheaper than pricing. \\
Catalog / list size & $48$ / $8$ (ablation: $24$ to $72$ / $4$ to $12$) & Unique-best offer plus filler items. Ranking stable across the grid. \\
Reservation $\tau$ & $0.42$-quantile of quality & Interior outside option. Neither never-skip nor always-skip. \\
Promo fraction & $0.35$ & Enough ads to make withholding tempting. \\
Strategic $T$ & $700$ synth.\ / $3000$ Adult / $2000$ German & Online logistic has time to retrain. \\
Manipulation levels & $5$ deltas & Small discrete action set so UNOP's $\alpha$-set is interpretable. \\
Cost $c$ & $0.55$ synth.\ / $0.45$ real (ablation: $0.15$ to $1.05$) & Interior: some gaming, not always-max or never. \\
Lock-in $T$ / window & $350$ / $30$ & Window defines $70\%$ lock-in. Greedy always hits at $30$. \\
Population users & $40$, $T=500$, $12$ seeds & Shared optimizer. Fraction of \unop{} from $0$ to $1$ in steps of $0.1$. \\
$\alpha$ main & $\{0.05,0.15,0.3\}$ extraction. $\{0.1,0.2\}$ MovieLens & Named slack in payoff units. MovieLens uses smaller values because ratings are already in $[0,1]$ with $0.25$ star gaps. \\
IR floor & $0$ in pricing, $\tau$ in recsys, off in bimatrix & Ablation in Appendix~\ref{app:ir} shows why the first two are not optional. \\
Utility noise $\sigma$ & $0$ main. $\{0.02,0.05,0.10,0.20\}$ ablation & $\alpha$ absorbs small $\sigma$. $\sigma>\alpha$ does not. \\
Workers & $96$ {CPU} processes & {CPU} only. \\
\bottomrule
\end{tabular}
\end{table}

The only user-facing hyperparameter is $\alpha$, in payoff units (dollars, stars, minutes of extra commute, points of approval probability). The remaining choices are experimental, and the ablations indicate that they do not carry the result.

\subsection{Aware-\unop{}}
\label{app:aware}

Unaware \unop{} is uniform on $\cA_\alpha$. Aware-\unop{} reweights that set by $(P(a)+\varepsilon)^{-1}$, so that currently over-predicted near-optimal actions are downweighted and currently under-predicted ones are upweighted. It is a small step that avoids the current forecast, not a no-regret algorithm and not a full best response to the optimizer. In matching pennies it is the difference between $0.000$ and $+0.054$, because avoiding a slow predictor raises payoff. In pricing it matches unaware surplus ($0.041$ versus $0.040$) at slightly lower accuracy ($0.785$ versus $0.807$), which is the same-effectiveness, harder-to-read refinement. In recommendation it is the least predictable effective policy (accuracy $0.241$) at a small surplus cost relative to unaware $\alpha=0.15$ ($0.124$ versus $0.148$). In Adult it is the least readable policy in the table and not the most effective, which is again the weak-dilemma pattern.

The reason we did not make awareness the default is practical as much as theoretical. It needs a forecast $P_t$ in the same action simplex the human is mixing on. A user who sees a posted price can build a crude $P(\mathrm{buy})$ from how often they have been seeing that price. A user who sees a recommendation list can less easily know the platform's predictive distribution over clicks. Unaware \unop{} only needs $u$ and $\alpha$. If an assistant can see $P_t$, it should use it. If it cannot, uniform-on-the-band is the distribution characterized by the theorems.

A numerical snapshot, again with $v=0.70$ and $\alpha=0.15$. Suppose the posted price is $0.65$, so $\cA_\alpha=\{\mathrm{skip},\mathrm{buy}\}$, and the monopolist's current $P(\mathrm{buy})=0.8$. Unaware mixes equally. Aware mixes in proportion to $\bigl((P(\mathrm{skip})+\varepsilon)^{-1},(P(\mathrm{buy})+\varepsilon)^{-1}\bigr)\approx(1/0.2,\,1/0.8)=(5,1.25)$ on $(\mathrm{skip},\mathrm{buy})$, i.e., more skip than buy, which is the direction that surprises a seller who has become confident that this price still sells. The next $P(\mathrm{buy})$ comes down, and the next mix tilts back. That oscillation is why awareness reduces accuracy without needing to touch dominated actions.

\subsection{Observed failure cases}
\label{app:failures}

The following negative results are in the released files.

First, MovieLens-100K and MovieLens-1M, complete files, \unop{} indistinguishable from greedy at the surplus level that matters. The failure is the absence of slack: the five-star grid does not give $\cA_{0.1}$ anything to contain except the top star.

Second, accuracy-seeking logistic classification on Adult and German Credit, complete tables. \unop{} reduces the accuracy of a manipulation-level forecast and does not raise agent utility. The failure is of the adversary's objective: there is no surplus to protect against an institution that is not extracting it.

Third, \unop{} without an {IR} floor, pricing environment. Surplus changes sign, from $+0.036$ to $-0.018$. The failure is of the definition: mixing into buy-above-value is a teaching signal. The definition in the main text therefore includes the floor.

Fourth, utility noise $\sigma=0.20$ with $\alpha=0.15$. \unop{} pricing surplus goes negative. The failure is that noise exceeds slack. A separate pricing sweep at the same $\sigma$ with $\alpha$ up to $0.50$ stays negative, so the binding fix is a better utility estimate, not a larger band.

Fifth, prisoner's dilemma and the common-interest suite, where \unop{} is either irrelevant (everyone defects) or actively unhelpful (mixing is a coordination tax). The failure is of the environment class, which the three-case pattern separates from surplus extraction.

Sixth, Hedge against a posted-price extractor. Near greedy against {UCB}, slightly negative against Thompson. No-regret on realized utilities does not protect a threshold. Mean-based learning still publishes one.

Each of these cases limits where mixing helps. Slack can be absent, the counterpart can be maximizing accuracy instead of surplus, and exploration below the reservation can teach the extractor.

\subsection{Statistical presentation}
\label{app:stats}

We report means over seeds, with error bars or seed scatters on the figures. We do not report $p$-values for the main contrasts. The pricing gap is $0.042$ versus $0.018$ surplus, with greedy's standard deviation $0.016$ and a positive gap on most matched draws of $v$ ($n=24$). The recommendation gap is $0.148$ versus $0.052$. The MovieLens comparison is a per-user scatter on the diagonal for all $943$ users and then all $6{,}040$ users (Figure~\ref{fig:A-ml100k}). The Adult and German pattern repeats on both tables and on the synthetic population. The oracle-pricing and recommendation contrasts in Section~\ref{sec:oracle} can be recomputed from \texttt{supplement/sprint}, matched by value or catalog.

\end{document}